\newcommand{\Eqref}[1]{Eq.~(\ref{#1})}                                
\newcommand{\Eqsref}[1]{Eqs.~(\ref{#1})}
\newcommand{\Secref}[1]{Section~\ref{#1}}
\newcommand{\Aref}[1]{Appendix~(\ref{#1})}
\def\id{\ensuremath{\mathbbm{1}}} 
\newcommand{\csch}{\mathrm{csch}} 
\newcommand{\mr}[1]{\mathrm{#1}}
\newcommand{\RR}{\ensuremath{\mathbbm{R}}} 
\newcommand{\NN}{\ensuremath{\mathbbm{N}}} 
\def\cH{{\cal H}}
\def\LZR{\ensuremath{\mathrm{L}^{\!\!\rule[-0.5ex]{0mm}{0mm}2}(\RR)}}
\renewcommand{\Re}{\mathrm{Re}}
\def\sz{\sigma_z}
\newcommand{\sh}{\mr{sh}}
\newcommand{\ch}{\mr{ch}}
\def\diag{\mr{diag}}
\newcommand{\bea}{\begin{eqnarray}}
\newcommand{\eea}{\end{eqnarray}}
\def\bi{\begin{itemize}}
\def\ei{\end{itemize}}
\newcommand{\R}{\RR}
\newcommand{\one}{\id}
\def\tr{\mathrm{tr}}
\def\ket#1{\left| #1\right>}
\newtheorem{theorem}{Theorem}
\newtheorem{lemma}[theorem]{Lemma}
\renewcommand{\url}[1]{\relax} 
\begin{document}

\author{G. Giedke}
\affiliation{Max--Planck Institute for Quantum Optics, Garching, Germany}
\author{B. Kraus}
\affiliation{Institute for Theoretical Physics, University of
Innsbruck, Innsbruck, Austria}
\title{Gaussian Local Unitary Equivalence of $n$-mode Gaussian States\newline  
and Gaussian LOCC transformations}

\date{$3^\mathrm{rd}$ November, 2013}

\begin{abstract}
  We derive necessary and sufficient conditions for arbitrary multi--mode
  (pure or mixed) Gaussian states to be equivalent under Gaussian local
  unitary operations. To do so, we introduce a standard form for Gaussian
  states, which has the properties that (i) every state can be transformed
  into its standard form via Gaussian local unitaries and (ii) it is unique
  and (iii) it can be easily computed. Thus, two states are equivalent under
  Gaussian local unitaries iff their standard form coincides. We explicitly
  derive the standard form for two-- and three--mode Gaussian pure states.\\
  We then investigate transformations between these classes by means of
  Gaussian local operations assisted by classical communication. For
  three-mode pure states, we identify a global property that cannot be created
  but only destroyed by local operations. This implies that the highly
  entangled family of symmetric three--mode Gaussian states 
is not sufficient to generated all three-mode
  Gaussian states by local Gaussian operations.
\end{abstract}

\maketitle

\section{Introduction}

Since most applications of quantum information rest upon the subtle properties
of multipartite quantum systems, the qualification and quantification of
multipartite entanglement is a central task of quantum information
theory. Whereas the bipartite case is for finite as well as for certain
infinite dimensional systems well understood, many questions are still open in
the multipartite setting \cite{HHHH09}.

The set of \emph{Gaussian states} still plays a major role in current
experiments dealing with continuous quantum variables, as it comprises those
states that are processed in most experiments. This, and the mathematical
simplicity of those states, which can be fully characterized by the finite set
of first and second moments, are the reason why mainly Gaussian states have
been investigated in the context of continuous-variable (CV) quantum information
\cite{Weedbrook2012}.

Regarding the entanglement properties of Gaussian states, it has been shown
that, as in the finite dimensional case a separable state has positive partial
transpose and that there exist entangled states with positive partial
transpose \cite{WeWo01}. However, for party $A$ possessing 1 mode and party
$B$ arbitrary many, it has been shown that partial transposition leads to a
necessary and sufficient condition for separability \cite{WeWo01}. For the
general bipartite case, i.e., when both parties possess an arbitrary number of
modes efficiently testable necessary and sufficient conditions of separability
have been derived \cite{GKLC01b,HyEi06}. In contrast to the case of finite
dimensional Hilbert spaces, the question of which states can be distilled to
pure entanglement has been solved for bipartite Gaussian states. In fact, is
was shown \cite{GDCZ01} that a bipartite Gaussian state is distillable iff its
partial transpose is not positive semidefinite 
\footnote{Recall that in the bipartite finite dimensional case this is only 
known to be 
a necessary condition for a state to be distillable. There, it is
long-standing (though still open) conjecture, that undistillable states which
do not have a positive semidefinite partial transpose \cite{DCLB99,DSST99}
exist.}. 
In Refs.~\cite{DVV79,ESP02,Fiu02,GiCi02} the problem of manipulation of
Gaussian states has been studied. In particular, in \cite{GiCi02,Fiu02} the
most general operations transforming Gaussian states to Gaussian states were
studied. These operations are called Gaussian operations. In \cite{GiCi02} it
has been proven that it is not possible to distill Gaussian states using
Gaussian operations (see also \cite{ESP02,Fiu02}).

The knowledge about entanglement in the multipartite setting is still far from
complete, although a large number of (mostly) partial results has been
obtained. The generation of pure multipartite entangled Gaussian states was
discussed in \cite{LoBr00}. A classification of multipartite entanglement
classes of arbitrary three-mode Gaussian states have been presented in
\cite{GKLC01}. Practical criteria for the certification of genuine
multipartite entanglement were derived in \cite{LF02}. A general solution to
the multipartite separability problem in the Gaussian case was provided by the
Gaussian entanglement witnesses and related semi-definite programs studied in
\cite{HyEi06}. A large number of quantum optical experiments demonstrating
multi-mode entanglement in increasingly large systems
\cite{ATY+03,YAF04,ST+Peng07,YU+Furusawa08,MERR08,PM+Pfister11,SZ+Peng12},
culminating in $10\,000$-mode (time-bin) entanglement reported in
\cite{YU+Furusawa13}. Moreover, several standard entanglement measures have
been adapted to the Gaussian setting (as, e.g., robustness (\cite{GiCi02},
obtainable from a semi-definite program as described in \cite{HyEi06}) or
Gaussian localizable entanglement \cite{FiMi07}) and notions such as GHZ-like
states \cite{ASI06}, maximal entanglement (as quantified by bipartite
entanglement) \cite{FFL+09}, monogamy of entanglement \cite{AdIl06} have been
specialized to the Gaussian setting.

Despite these advances, the study of multipartite entanglement is still in an
early stage. One method to gain more insight into the entanglement properties
of multipartite states is to investigate their interconvertibility. An
important fine-grained classification of multipartite entangled states sorts
them according to convertibility by local unitaries, leading to the notion of
local unitary (LU) equivalence \cite{GRB98,Sud00,BaLi01,Kraus2010}. Clearly,
two LU equivalent states possess the same amount of entanglement and are
equivalent as a non-local resource. LU equivalence leads to a very detailed
classification of multipartite states with a continuum of inequivalent
classes. A more coarse-grained (and therefore often more insightful) picture
emerges if a larger class of transformations is allowed. Especially useful for
entanglement classification is to allow for non-trace-preserving operations
[(partial) measurements] and classical communication between parties which
leads to the set of stochastic local operations and classical communication
(SLOCC) \cite{BPR+99}. SLOCC play an important role in entanglement theory
\cite{Nie99,Vid99b,DVC00,VDMV02,OsSi12,VSK13,GoWa13}. SLOCC-convertibility
gives rise to fewer equivalence classes than LU-equivalence and in some cases
only finitely many \cite{Nie99,Vid99b,DVC00} SLOCC-classes exist.

For Gaussian states, it is reasonable to consider convertibility under
Gaussian operations. Conversion (of mixed states) under trace-preserving local
Gaussian operations (LOG; not necessarily unitary) was investigated for the
two-mode case in \cite{EP02} and for the general bi- and tripartite setting in
\cite{WLY+03,WLY04}, while transformation under trace-nonpreserving local
Gaussian operations has been investigated in \cite{GECP03} for pure bipartite
states. The equivalence of Gaussian states under Gaussian local unitaries
(GLU) was studied for the (mixed) bipartite setting in \cite{Wolf08} and for
more parties in \cite{WLY04,AdIl06,Adesso06,Serafini2007}. In
\cite{Serafini2007} and \cite{Adesso06} standard forms for ``generic''
$n$-mode mixed and pure states were introduced. The case of pure three-mode
states has been studied in detail in \cite{ASI06}. There, it is shown (for
``generic'' pure Gaussian states) that the GLU equivalence classes are
characterized by three positive numbers (related to local purities) and a
simple standard form was derived.

The aim of this paper is to derive a standard form for arbitrary Gaussian
states which has the properties that (i) every state can be transformed into
its standard form via Gaussian local unitaries, (ii) it is unique, and (iii)
it can be easily computed. Due to these properties, the solution to the
Gaussian LU--equivalence problem follows easily. We then focus on pure
Gaussian three--mode states and show that any such state is characterized by
the three local purities. The standard form of those states is used to
investigate the manipulation of those states using GLOCC. We show that the
completely symmetric states, which are sometimes referred to maximally
entangled states, \emph{cannot} be used to obtain an arbitrary state via
GLOCC.

The remainder of the paper is organized as follows. In Sec.~\ref{Sec_Intro} we
briefly review the basic concepts and results on Gaussian states needed
later. In Sec.~\ref{Sec_GLU} we present a standard form for arbitrary (pure
and mixed) $n$--mode Gaussian states, where all modes are spatially separated,
and derive the necessary and sufficient conditions for Gaussian state to be
equivalent under Gaussian local unitaries (GLUs). As we will show, this
criterion can efficiently be applied, since it only involves the computation
of the singular value decomposition of $2\times 2$ matrices, independently of
the system size. We will then demonstrate our methods by considering first the
simplest case of two modes, where we show that our standard form coincides
with the one presented in \cite{DGCZ99,Sim00}. In Sec.~\ref{SubsecThreeM}, we
investigate the different GLU--equivalence classes of three--mode Gaussian
states. We show that \emph{any} pure three--mode Gaussian state is
GLU--equivalent to a state with no correlations between the $X$ and $P$
quadratures and that an arbitrary three-mode pure Gaussian state is (up to
GLU) uniquely characterized via the three local purities, i.e., by the
bipartite entanglement between each single mode and the remaining two
modes. This reproduces the results of \cite{ASI06} but shows that they apply
to all three-mode states (not only a subset of generic states).  In order to
obtain more insight into the entanglement properties of three--mode states, we
consider in Sec.~\ref{SecGLOCC} the more general set of Gaussian local
operations assisted by classical communication (GLOCC). In particular, we show
that it is not possible to obtain from the symmetric Gaussian pure three-mode
states (which are sometimes referred to as maximally entangled states or
continuous-variables analogs of both GHZ and W states (``CV GHZ/W-states'',
for short) \cite{LoBr00,ASI06,AdIl06}), all pure
three-mode state via GLOCC. This implies that those states are not, as the
two--mode squeezed states are in the bipartite case, sufficient to obtain
deterministically any other state via local Gaussian operations (and thus not
a Gaussian analog of the maximally entangled set introduced in
\cite{VSK13}). In contrast, we finally present a class of states from which,
in particular, all symmetric states can be obtained via GLOCC. Hence, this
class of states might be called more entangled than the symmetric one.

\section{Preliminaries}
\label{Sec_Intro}
We summarize here some results concerning Gaussian
states and introduce our notation. We consider systems composed of $n$ modes,
i.e., $n$ distinguishable infinite
dimensional subsystems, each with Hilbert space $\cH = \LZR$. To each mode
$k=1,\dots,n$ belong two
canonical observables $X_k, P_k$ which obey the commutation
relation $[X_k,P_k]=i$. Defining $R_{2k-1}=X_k, R_{2k}=P_k$, these relations
are summarized as $[R_l,R_m]=-iJ_{lm}$,
using the antisymmetric $2n\times 2n$ matrix
\begin{equation}
\label{auxdefJn} J\equiv \oplus_{k=1}^n J_1,\quad J_1 \equiv
\left(\begin{array}{cc} 0 & -1 \\ 1 & 0
\end{array}\right),
\end{equation}
where here, and in the following $\oplus$ denotes the direct sum. Let us
denote the unitary displacement operator by
\begin{equation}
\label{displacement}
D(x)=e^{i\sum_{k} (q_k X_k + p_k P_k)}\equiv e^{ix\cdot R},
\end{equation}
where $x =(q_1,p_1,\ldots q_n,p_n)\in\RR^{2n}$. Using this notation, the
characteristic function of a state $\rho$ is defined as
\begin{equation}
\label{defcharc} \chi_{\rho}(x) = {\rm tr} [\rho D(x)].
\end{equation}

Gaussian states are those states for which $\chi$ is a Gaussian multivariant
function of the phase space coordinates, $x$ \cite{MaVe68}, i.e.,
\begin{equation}\label{charfct}
\chi_{\rho}(x) = e^{-\frac{1}{4}x^T \gamma x - id^Tx}.
\end{equation}
Here, $\gamma$ is a real, symmetric, strictly positive $2n\times 2n$ matrix, the
\emph{covariance matrix} (CM), and
$d\in\RR^{2n}$ is a real vector, the
\emph{displacement}. A Gaussian state is
completely determined by $\gamma$ and $d$. Note that both
$\gamma$ and $d$ are directly measurable quantities, as their
elements $\gamma_{kl}$ and $d_k$ are determined by the expectation
values and variances of the operators $R_k$, via
\bea%
\label{defdisplacement}
d_k=\tr(\rho R_k),
\eea%
\begin{equation}%
\label{auxCM1} \gamma_{kl} =  2 \Re\{\tr[\rho (R_k-d_k)
(R_l-d_l)]\}.
\end{equation}%
The displacement of a (known) state can always be adjusted to $d=0$ by a
sequence of local unitary operators applied to individual modes \footnote{The
  unitaries are generated by linear Hamiltonians.}. Thus, the first
moments are irrelevant for both the study of GLU--equivalence classes and the
entanglement contained in the state and will therefore be set to zero.

Not all real, symmetric, positive matrices $\gamma$ correspond to the CM of a
physical state, they also have to satisfy the uncertainty principle. There are
several equivalent ways to characterize valid CMs, which are all useful in the
following. Before we summarize them in Lemma \ref{lemmaCM} let us recall that
a (real) linear transformation $S$ on phase space is called \emph{symplectic}
if it preserves $J$, i.e., if $SJS^T= J$ holds. The group of real symplectic
$2n\times 2n$ matrices is denoted by $Sp_{2n}(\R)$. Let us now state the
conditions for a matrix to be a valid CM.

\begin{lemma} (Covariance Matrices)\\
\label{lemmaCM} A real, symmetric and positive $2n\times2n$ matrix,
$\gamma$, is the CM of a physical state iff one of the following equivalent conditions holds
\begin{subequations}\label{physicalconds}
\begin{equation}\label{ii}
\gamma+J\gamma^{-1}J \geq 0,
\end{equation}
\begin{equation}\label{iii}
\gamma-i J \geq 0,
\end{equation}
\begin{equation}\label{iv}
\gamma = S^T(D \oplus D) S,
\end{equation}
for $S$ symplectic and $D\geq\id$
diagonal.\\
The CM $\gamma$ describes a \emph{pure} state iff equality holds in
  \Eqref{ii} or, equivalently, iff $D=\id$ in \Eqref{iv}, i.e., iff
  $\det\gamma=1$.
\end{subequations}
\end{lemma}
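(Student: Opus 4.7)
My plan is to establish \Eqref{iii} directly from the positivity of $\rho$ combined with the canonical commutation relations, then reduce both \Eqref{ii} and \Eqref{iv} to \Eqref{iii} by Williamson's symplectic normal form, and finally realize every admissible CM by an explicit Gaussian state. Step one is short: for any $v\in\mathbb{C}^{2n}$ the operator $A=\sum_k v_k(R_k-d_k)$ satisfies $\tr(\rho A^\dagger A)\ge 0$; splitting the product $A^\dagger A=\sum_{k,l}\overline{v_k}v_l(R_k-d_k)(R_l-d_l)$ into its symmetric and antisymmetric parts and inserting the definition \Eqref{auxCM1} of $\gamma$ together with $[R_k,R_l]=-iJ_{kl}$ yields $v^\dagger(\gamma-iJ)v\ge 0$, which is exactly \Eqref{iii}.

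For the equivalences I would invoke Williamson's theorem: every real symmetric positive--definite $2n\times 2n$ matrix admits a representation $\gamma=S^T(D\oplus D)S$ with $S\in Sp_{2n}(\R)$ and $D=\diag(d_1,\dots,d_n)>0$. Because $S$ is symplectic it preserves $J$, so conjugation by $S$ maps all three conditions covariantly and it suffices to verify them on the block-diagonal model $D\oplus D$. There the calculation decouples mode by mode: the $2\times 2$ matrix $d_k\id-iJ_1$ has eigenvalues $d_k\pm 1$, while $d_k\id+J_1(d_k\id)^{-1}J_1=(d_k-d_k^{-1})\id$; both are positive semidefinite iff $d_k\ge 1$. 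This simultaneously proves \Eqref{ii}$\Leftrightarrow$\Eqref{iii}$\Leftrightarrow$\Eqref{iv}, with the common content being $D\ge\id$.

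To close the ``iff'' I still need sufficiency, i.e., to exhibit a physical state for every admissible $\gamma$. For this I would take the tensor product of $n$ single--mode thermal states whose variances reproduce the symplectic spectrum $D$, producing a Gaussian state with CM $D\oplus D$, and then apply the Gaussian unitary $U_S$ associated with $S$ (generated by a quadratic Hamiltonian); the resulting state has CM $S^T(D\oplus D)S=\gamma$. Purity follows from the standard identity $\tr(\rho^2)=1/\sqrt{\det\gamma}$ for Gaussian states: since $\det\gamma=(\det D)^2$ and $D\ge\id$, purity forces $\det D=1$ and hence $D=\id$, which by the $2\times 2$ computation above is also the equality case in \Eqref{ii}. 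The only nontrivial technical input is Williamson's theorem; granted that, the equivalences collapse to a one--mode positivity check, and the only remaining subtlety is the existence of the state, handled by the thermal--plus--symplectic--unitary construction.
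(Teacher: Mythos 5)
Your proof is correct, but note that the paper does not actually prove this lemma: it simply defers to the literature, citing \cite{MaVe68,WeWo01,Fol89}. What you have written is therefore a self-contained version of the standard argument rather than an alternative to an in-paper proof. Your route --- necessity of \Eqref{iii} from $\tr(\rho A^\dagger A)\geq 0$ with $A=\sum_k v_k(R_k-d_k)$ and the splitting into anticommutator and commutator parts; reduction of \Eqref{ii}--\Eqref{iv} to the single-mode check $d_k\geq 1$ via Williamson's theorem together with the covariance of all three conditions under symplectic congruence (using $S^TJS=J$ and $J\gamma^{-1}J=S^T J\Delta^{-1}J S$ for $\gamma=S^T\Delta S$); sufficiency by taking thermal states with CM equal to the Williamson form and acting with the corresponding metaplectic unitary; and the purity statement via $\tr(\rho^2)=|\gamma|^{-1/2}$ --- is essentially how these facts are established in the cited references, so nothing essential is missing. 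Three small points to tidy up: (a) with the paper's convention $\gamma\mapsto S\gamma S^T$ of \Eqref{time-evCM}, the unitary needed in your sufficiency step is the one implementing $S^T$ rather than $S$; this is harmless since the transpose of a symplectic matrix is symplectic, but the convention should be stated. (b) Your per-mode computation with $d_k\id - iJ_1$ tacitly uses the mode-ordered Williamson form $\oplus_k d_k\id$, whereas ``$D\oplus D$'' in \Eqref{iv} is written in the quadrature-ordered basis while $J=\oplus_k J_1$ is mode-ordered; the paper itself is loose about this, but you should fix one ordering and stick to it. (c) The necessity step manipulates unbounded operators and presupposes finite second moments; for states with Gaussian characteristic function this is unproblematic, but it is exactly the technical point where the rigorous treatments in \cite{MaVe68,Fol89} do real work, and a remark acknowledging this would make the argument honest about its level of rigor.
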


The proofs of these statements can be found in \cite{MaVe68,WeWo01,Fol89} respectively. As an example of a valid CM, let us recall that the CM of an arbitrary pure two-mode states ($1\times 1$ case), $\gamma$, can
be written as \cite{SMD94}
\begin{equation}\label{auxgamma2}
 \gamma = (S_1\oplus S_2)\left( \begin{array}{cc} \cosh r \one& \sinh r\sigma_z\\
 \sinh r \sigma_z&\cosh r\one
\end{array} \right)(S_1^T\oplus S_2^T).
\end{equation}
Here and in the following
$S_{1,2}$ are local symplectic matrices,
$r\geq0$, and $\sigma_x,\sigma_y,\sigma_z$ denote the Pauli operators. The parameter $r$ contains all information about
the entanglement of the state, whereas $S_1$ and $S_2$ contain
information about local squeezing \footnote{Given a CM $\gamma$ in its
block form (\ref{defBlockmgamma}), one can readily find its pure
state standard form using the following procedure:
We have $S_k=O_kD_kO_k'$, where $O_k,O_k'$ are rotations and
$D_k=\diag(e^{r_k},e^{-r_k})$.
The six matrices are determined as follows: $O_{1(2)}$ diagonalize $A
(B)$. The rotations $O_k'$ realize the singular value
decomposition of $D_1^{-1}O_1^TCO_2D_2^{-1}$. The two-mode
squeezing parameter $r$ is given by $\cosh r=\sqrt{\det(A)}$,
while the squeezing parameters $r_1,r_2$ of $S_k$ can be
calculated by the trace of $A$ and $B$, resp.: $\cosh 2 r_1 = (\tr
A)/(2\cosh r), \cosh 2 r_2 = (\tr B)/(2\cosh r)$.}. An example of a pure state would be the two--mode squeezed state, whose CM is given by \Eqref{auxgamma2} with $S_1=S_2=\one$.

Whenever we consider a bipartite splitting of the state ($n$ modes at one side
and $m$ modes at the other, which we call $n\times
m$ case in the following) we might write the CM in the index--free block form
\begin{equation}
\label{defBlockmgamma}
 \gamma = \left( \begin{array}{cc} A&C\\ C^T&B
\end{array} \right).
\end{equation}
Here $A$, $B$ and $C$ are $2n\times 2n$, $2m\times 2m$, and $2n\times 2m$
matrices, respectively. Note that $A$ ($B$) is the CM corresponding to the
reduced state of the first (second) system, respectively. The correlations
between both systems are described by the matrix $C$, which vanishes for
product states.

Since we are interested in Gaussian local unitary equivalence classes in this
paper, we also review here how the CM $\gamma$ (and the displacement $d$) of a
Gaussian state $\rho$ change under the evolution of a Gaussian unitary
operator $U$. As can be easily verified, a unitary operator 
transforms any Gaussian state into a Gaussian state (i.e., describes a
Gaussian operation) iff there exists a symplectic matrix $S$ and a real vector
$r\in\RR^{2n}$, such that
$U^\dagger R U=S R+r$. Discarding the irrelevant displacement, the CM 
transforms according to \footnote{One could also use the Heisenberg equation,
  i.e., $dA/dt=i[H,A]$ for the mode operators ($X_k,P_k$) to obtain this
  result.}
\begin{align}
\label{time-evCM}
\gamma^\prime&=S \gamma S^T. 
\end{align}

The most general $S\in Sp_{2n}(\RR)$ can be written as $S=O_1 DO_2$, where
$O_{1,2}$ are real orthogonal and symplectic matrices and
$D=\mbox{diag}(r_1,\dots,r_n,1/r_1,\dots,1/r_n)$, with $r_i\in\R^+$
\cite{ADMS95b}; for $D=\one$, $S$ is called a passive operation, otherwise it
is called active. Apart from describing Gaussian unitary operations,
symplectic matrices can also be used to derive a simple normal form
(Williamson normal form) for arbitrary CM, see \Eqref{iv}. The eigenvalues
$d_i$ of $D$ are called the symplectic eigenvalues of $\gamma$ and are $\geq
1$. They are related to the purity of the corresponding Gaussian state,
$\rho$, since $\tr(\rho^2)$ is given by \cite{Scu98} \bea %
\tr(\rho^2)=|\gamma|^{-1/2}=\prod_{i=1}^n d_i^{-1}, \eea %
where here and in the following, $|\cdot|$ denotes the determinant. This can
be easily verified by noting that $|\gamma|=|\gamma J|= |S^{-1} \oplus_{i=1}^n
d_i \one (S^T)^{-1} J|=| \oplus_{i=1}^n d_i \one |=\prod d_i^2$. The purity
can be utilized to quantify the entanglement contained in pure states. For
instance, the quantity \bea\label{defpurity}%
P(\ket{\Psi})=\tr(\rho_{red}^2)^{-2}, \eea%
where $\rho_{red}$ denotes the reduced density operator of either system $A$
or $B$ of the pure state $\ket{\Psi}$, increases the more entangled
$\ket{\Psi}$ is.  Using the block form of the CM, $\gamma$ [see
Eq.~(\ref{defBlockmgamma})], $P(\ket{\Psi})$ is given by $|A|=|B|$.
\section{GLU--Equivalence and standard form}\label{Sec_GLU}

We consider an arbitrary $n$--mode Gaussian state (pure or mixed), with CM
$\gamma$ and assume a partition of one mode per site. We first derive a
standard form of $\gamma$, $S(\gamma)$, which we show to be unique, easily
computable and to which each CM can be mapped via GLU. Two states are called
GLU-equivalent if their density matrices can be transformed into each other by
Gaussian local unitaries. Thus two Gaussian states with CM $\gamma$
resp. $\Gamma$ are GLU-equivalent iff their CMs can be transformed into each
other by a local symplectic transformation. Due to the fact that the standard
form, which we introduce here, is unique it easily follows that two Gaussian
states are GLU--equivalent iff their standard forms coincides.

We denote in the following by $\gamma_{jk}$ the $2\times2$ matrix describing
the covariances between mode $j$ and $k$. As mentioned before any $2\times 2$
real symplectic matrix can be written as $O_1 \mbox{diag}(r, 1/r) O_2$, with
$r\in \R$ and $O_i$ real orthogonal.  The standard form is reached in two
steps. First, we apply to each mode $j$ the active GLU that symplectically
diagonalizes $\gamma_{jj}$, i.e., $S(\gamma)_{jj}=\lambda_j\one$. This leaves
still the freedom to apply local \emph{passive} operations $S^p_j$ to each
mode $j$, which are given by $O_j\in SO(2)$. In the second step, we fix the
$O_j=\exp(i\alpha_j\sigma_y)$ by considering the off-diagonal blocks
$\gamma_{jk}, j<k$ in turn (row by row, from left to right). First consider
$\gamma_{12}$ and determine its singular values; if they both are zero,
continue with the next block; if they are non-zero but degenerate, then
$\gamma_{12}$, obeying $\gamma_{12} \gamma_{12}^T\propto \one$ and
$|\gamma_{12}|>0$, is proportional to a real special orthogonal matrix, $O$
which we write without loss of generality as $O=e^{i\alpha\sigma_y}\in
SO(2)$. We fix $\alpha_2=\alpha+\alpha_1$ (with $\alpha_1$ being determined
subsequently); if they are non-degenerate and add to zero, then
$\gamma_{12}\propto \sigma_z O$, with $O=e^{i\alpha\sigma_y}\in SO(2)$ and we
fix $\alpha_2=\alpha-\alpha_1$ (we refer to the two cases that $\gamma_{ij}$
is orthogonal as ``degenerate''); otherwise, we fix both $\alpha_1,\alpha_2$
such that $O_1,O_2$ are the unique matrices $\in SO(2)$ such that
$O_1\gamma_{12}O_2^T=\mbox{diag}(d_{12},d_{12}')$, with $d_{12}\geq |d_{12}'|$
\footnote{$O_1$ diagonalizes $\gamma_{12}\gamma_{12}^T$ and $O_2$ diagonalizes
  $\gamma_{12}^T\gamma_{12}$.}. In all four cases $S(\gamma)_{12}$ is
diagonal. Now treat $\gamma_{13}$ (and then all subsequent $\gamma_{jk}$) in
the same manner. If $\alpha_j$ has already been determined in a previous step,
then for non-degenerate singular values of $\gamma_{jk}$ we fix $\alpha_k$ by
diagonalizing $\gamma_{jk}^T\gamma_{jk}$. In this manner, all $\alpha_j$ will
be uniquely determined except in the case that (for some $j$) all
$\gamma_{jk}$ are zero (in which case the mode $j$ factorizes and we set
$\alpha_j=0$) or that for each $j$ there is exactly one non--vanishing
degenerate $\gamma_{jk}$ (in this case we set the undetermined
$\alpha_{j}=0$). Any $n$-mode CM is transformed to its standard form
$S(\gamma)$ by applying the $n$ local active and $n$ local passive unitaries
as described above and we have
\begin{theorem}[Criterion for GLU-Equivalence]\label{th:GLUequiv}
  Any CM $\gamma$ can be transformed into its standard form, $S(\gamma)$,
  by Gaussian local unitaries. Two CMs $\gamma$ and $\Gamma$ are
  GLU-equivalent if and only if $S(\gamma)=S(\Gamma)$.
\end{theorem}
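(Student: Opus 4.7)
The plan is to prove the theorem in two halves: (a) show that the construction produces $S(\gamma)$ via local symplectic transformations applied to $\gamma$, and (b) show that $S(\gamma)$ is a class invariant, i.e., $S(\gamma)=S(\Gamma)$ whenever $\gamma$ and $\Gamma$ are GLU-equivalent. Granted (a), the ``if'' direction of the iff is immediate, since $\gamma$ and $\Gamma$ are then both GLU-equivalent to their common standard form.

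For (a), I would verify that each step in the stated recipe is implementable as a local symplectic operation. Step~1 is just mode-by-mode Williamson normal form: \Eqref{iv} applied to the single-mode block $\gamma_{jj}$ yields an $S_j\in Sp_2(\R)$ with $S_j\gamma_{jj}S_j^T=\lambda_j\one$, and the direct sum $\bigoplus_j S_j$ is local and symplectic. Step~2 uses only the passive rotations $O_j=e^{i\alpha_j\sigma_y}\in SO(2)$, which are simultaneously orthogonal and symplectic, so they leave each $\lambda_j\one$ fixed while acting on off-diagonal blocks by $\tilde\gamma_{jk}\mapsto O_j\tilde\gamma_{jk}O_k^T$. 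Since every angle prescription in the construction is realizable by such a rotation, the reduction to $S(\gamma)$ is effected entirely by local symplectic transformations.

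For (b), my approach would first isolate the residual symplectic freedom left after step~1 and then show that step~2 exhausts it deterministically. The key observation is that the symplectic stabilizer of $\lambda_j\one$ is $Sp_2(\R)\cap O(2)=SO(2)$; consequently, if $\tilde\gamma$ and $\tilde\Gamma$ are the results of step~1 applied to GLU-equivalent $\gamma$ and $\Gamma$, they differ only by a local passive rotation $\tilde\Gamma=(\bigoplus_j O_j)\tilde\gamma(\bigoplus_j O_j)^T$ with $O_j\in SO(2)$. It therefore suffices to show that step~2's output is invariant under any such passive local rotation of its input. For each off-diagonal block the action $\tilde\gamma_{jk}\mapsto O_j\tilde\gamma_{jk}O_k^T$ preserves the singular values and the sign of $|\tilde\gamma_{jk}|$, so the entries $d_{jk},d_{jk}'$ appearing in $S(\gamma)_{jk}$ are manifestly class invariants.

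The main obstacle is the bookkeeping for the degenerate cases. When $\tilde\gamma_{jk}$ has non-degenerate singular values, the SVD (combined with the convention $d_{jk}\geq|d_{jk}'|$) fixes both $O_j$ and $O_k$ uniquely; but when $\tilde\gamma_{jk}$ is proportional to an orthogonal matrix only the combination $\alpha_k\pm\alpha_j$ is pinned down, and when $\tilde\gamma_{jk}=0$ no constraint is imposed. I would argue by induction over the processing order of pairs $(j,k)$ with $j<k$ that at each stage the recipe either uniquely determines a previously unfixed $\alpha_k$ from a prior choice of $\alpha_j$ via the SVD of $\tilde\gamma_{jk}$, or defers the assignment to a later block. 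For any mode $j$ whose blocks are collectively insufficient to determine $\alpha_j$, the convention $\alpha_j=0$ must be shown to be consistent on the equivalence class --- which will hold precisely because in that exceptional situation the remaining passive freedom acts trivially on every block entering $S(\gamma)$. Verifying carefully that this default is never contradicted by a downstream block, and that the sequential gauge-fixing converges on a unique CM regardless of the representative $\tilde\gamma$ or $\tilde\Gamma$ one starts from, is the technical heart of the uniqueness argument.
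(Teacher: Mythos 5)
Your proposal follows essentially the same route as the paper, which offers no separate proof beyond the construction itself: transformability is just mode-wise Williamson reduction plus passive rotations, and uniqueness rests on the fact that the residual local freedom after step~1 is exactly $SO(2)$ per mode (the symplectic stabilizer of $\lambda_j\one$), which the step-2 prescription then gauge-fixes; your explicit stabilizer observation is precisely what the paper leaves implicit. The degenerate/exceptional-case bookkeeping you defer is likewise only asserted, not spelled out, in the paper, so your outline matches its level of rigor and contains no misstep.
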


Note that this criterion for GLU--equivalence is valid for both, mixed and
pure states. Let us mention here that an essentially identical form for
$n$-partite $n$-mode Gaussian states was introduced in \cite{Serafini2007} and
that the $n\times n\times n$ case was discussed in \cite{WLY04}. However, the
question whether this is a unique standard form (which is essential for
Theorem~\ref{th:GLUequiv}) was only discussed for generic states.  
Let us close this discussion with a remark on the relation of LU-- and
GLU--equivalence before using
the GLU--criterion to derive the different GLU--classes of 2--mode and
3--mode states.

When studying GLU-equivalence, we restrict the allowed operations to a very
small subset of all local unitaries. Hence, in general, two LU-equivalent
states are not GLU-equivalent. However, for pure Gaussian states in a number
of relevant cases the two notions coincide.  Note that, in particular, if two
pure states are LU--equivalent, then the Schmidt coefficients of these states
across any bipartition must be the same. If we can show that the GLU-classes
of Gaussian states are uniquely characterized by their Schmidt coefficients
across all bipartitions, then it follows that for those Gaussian states
LU--equivalence implies GLU--equivalence. This is actually the case for pure
bipartite Gaussian states, as implied by the results of \cite{GECP03}:.  every
pure $n\times m$ Gaussian state $\ket{\psi}$ is GLU--equivalent to
$\mr{min}\{n,m\}$ two-mode squeezed states $\ket{\psi_\mr{tms}(r_j)}$ with
squeezing parameters $r_1\geq r_2\geq\dots r_{\mathrm{min}\{n,m\}}\geq0$,
which fixes the Schmidt coefficients
$\lambda_{l_1,\dots,l_n}=\prod_{j=1}^n\frac{\tanh^{2l_j}r_j}{\cosh r_j}$ where
$l_j\in\NN$.  Thus, if two pure bipartite Gaussian states are LU--equivalent
they have the same standard form $\otimes_{j=1}^n\ket{\psi_\mr{tms}(r_j)}$ and
therefore are also GLU--equivalent. As we show in Subsec.~\ref{SubsecThreeM}
below, the same implication also holds for pure $1\times1\times1$ Gaussian
states.

\subsection{$1\times 1$ case}\label{subsec:1x1}
Let us first consider the simplest case of two mode Gaussian states. First we apply active transformations to map the reduced
states, $\gamma_{ii}$ to thermal states, $\lambda_i \one$. Since the state is
pure the reduced states must be identical,
i.e., $\lambda_1=\lambda_2=\lambda$. According to the algorithm above we apply
next the orthogonal matrices, $O_1,O_2$ such that $O_1\gamma_{12}O_2^T=D$,
where $D$ is diagonal. Thus, the standard form, $S(\gamma)$ is
\begin{align}
S(\gamma) &=\left(\begin{array}{cc}
\lambda \one &D \\
D& \lambda \one\end{array}\right).
\end{align}
Next, we show that the standard form introduced here coincides in the case of pure
two--mode states with the form \cite{GECP03} \bea \label{StFrom2Modes} \gamma
=\left(\begin{array}{cc}
    \cosh(r)\one &\sinh(r) \sigma_z \\
    \sinh(r) \sigma_z & \cosh(r)\one\end{array}\right),\eea with the squeezing
parameter $r$. Note that due to condition $\gamma J \gamma \geq J$, we have $\lambda\geq 1$. Imposing now the condition that $\gamma$ corresponds to a pure state,
i.e., $\gamma J\gamma=J$ we find $\lambda^2\one + \tilde{D} D=\one$ and
$\lambda \{J,D\}=0$, where here and in the following $\{A,B\}=AB+BA$ denotes
the anticommutator between any operators $A$ and $B$ and $\tilde{D}=\sigma_x
D\sigma_x$. Since $\lambda \geq 1$ must be fulfilled by any CM, it must hold that
$\{J,D\}=J(D+\tilde{D})=0$, which implies that $D=\bar{\lambda} \sigma_z$, for some real $\bar{\lambda}$. Due to
the first condition we get then $\lambda^2-\bar{\lambda}^2=1$, which implies
that we can choose $\bar{\lambda}=\sinh(r)$ and $\lambda=\cosh(r)$, for some
$r\in \R$. Thus the standard form coincides with Eq.~(\ref{StFrom2Modes}).

\subsection{$1\times 1\times 1$ case}
\label{SubsecThreeM}
In this section we identify the different GLU--classes of 3--mode Gaussian
states. First we explicitly provide the general standard form of
Theorem~\ref{th:GLUequiv} for the three--mode case. Then we show that it
considerably simplifies for pure states and prove an exhaustive
parameterization of the pure three-mode states.

\subsubsection{Standard form: Mixed states}
In this section we derive the standard form for an arbitrary $1\times 1\times
1$ Gaussian state. It is convenient to introduce (index-free)
notation for the nine $2\times2$ blocks of $\gamma$ by defining the
matrix $\gamma$ as
\begin{equation}
  \label{eq:1}
\gamma \equiv \left( \begin{array}{ccc}
A&K&L\\
K^T&B&M\\
L^T&M^T&C
\end{array} \right).
\end{equation}
The basis chosen here will be called mode-ordered, as indices referring to the
same mode ($A,B$ or $C$) are grouped. Sometimes the \emph{quadrature-ordered}
basis is used. This is a permutation in which first all the indices referring
to $X$-quadratures appear, followed by those referring to $P$.

As before, we first choose the active transformations to map the reduced
states into thermal states. Using the same notation as before, the real
orthogonal matrices $O_i$, for $i=1,2,3$ are then used to map the
off--diagonal matrices into diagonal matrices. In case the singular values of
all off--diagonal blocks are non--degenerate, we use $O_1$ and $O_2$ to map
$K$ into a diagonal matrix with sorted entries in the diagonal, i.e., $O_1 K
O_2^T \equiv \mbox{diag}(d_{12}^+,d_{12}^-)$, with $d_{12}^+\geq
|d_{12}^-|$. $O_3$ is used to map $L$ into the form $OD_{13}$, for diagonal
$D_{13}$ and some matrix $O\in SO(2)$. Thus, the standard form is given by
\bea\label{eq:stdform} %
\gamma_s =\left(\begin{array}{ccc}
    \lambda_1 \one &D_{12}& OD_{13} \\
    D_{12}& \lambda_2 \one& M\\ D_{13}O^T & M^T&\lambda_3
    \one\end{array}\right), 
\eea%
where $D_{12}$ and $D_{13}$ are diagonal and $O\in SO(2)$. Hence, the number
of free parameters in \Eqref{eq:stdform} is $12$. In case of degeneracy, more
of the off--diagonal blocks can be made diagonal, as explained above. Due to
Theorem~\ref{th:GLUequiv} we know that two states are GLU--equivalent iff
their standard forms (Eq.~(\ref{eq:stdform})) coincide.

\subsubsection{Standard form: Pure states}
If we specialize to \emph{pure} states, the CM must fulfill additional
constraints and the number of free parameters is greatly reduced. We then have
$\gamma J\gamma = J$, i.e., $\gamma$ is a symplectic matrix.  Taking into
account that $\gamma$ is symmetric, we have $\gamma = SS^T$ for a symplectic
matrix $S=ODO'$. The number of real parameters describing a pure $n$-mode
state is therefore $n^2+n$. Since the GLU, i.e., the local (single-mode)
symplectic operations are parameterized by $3n$ parameters, one would expect a
$n^2-2n$-parameter standard form. Hence, for the three mode Gaussian states
considered here, one would expect three free parameters. In order to derive
the parameterization we first show in the following theorem, that pure
three-mode Gaussian states are of a particularly simple form.

\begin{theorem}[$1\times1\times1$ pure state $xp$ block
  diagonal]\label{th:StdForm} 
  Any pure $1\times1\times1$ Gaussian state is GLU--equivalent to a state,
  whose CM, $\gamma$, as given in Eq.~(\ref{eq:1}) has the property that
  \emph{all} the submatrices $A,B,C,K,L,M$ are diagonal. I.e., in the
  $xp$-ordered basis we have
  \begin{align}\label{eq:2}
      \gamma_s &= \gamma_x\oplus\gamma_x^{-1}, \mbox{ where}\\
      \gamma_x &= \left( \begin{array}{ccc}
          \lambda_1 &d_{12}^+&d_{13}^+\\ d_{12}^+&\lambda_2 &d_{23}^+\\
          d_{13}^+&d_{23}^+ &\lambda_3 
\end{array} \right),
 \end{align}
 with $\lambda_i$ denoting the local purities and $d_{ij}^+ \in \R$.
\end{theorem}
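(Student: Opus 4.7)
The plan is to start from the mixed-state standard form already provided by Theorem~\ref{th:GLUequiv}: after its ``active'' step the diagonal $2\times2$ blocks are thermal, $\gamma_{ii}=\lambda_i\one$, and the only residual GLU freedom consists of three local rotations $R_i\in SO(2)$. The task is then to use these three rotations together with the pure-state identity $\gamma J\gamma=J$ to force the three off-diagonal $2\times2$ sub-blocks $K:=\gamma_{12}$, $L:=\gamma_{13}$, $M:=\gamma_{23}$ to become diagonal simultaneously. First I would use $R_1,R_2$ to bring $K$ into SVD form, $K=\mr{diag}(k_1,k_2)$. Only $R_3$ then remains, which by itself cannot diagonalize two generic $2\times2$ matrices, so the purity condition must do the rest.

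The main technical tool is the $2\times2$ identity $X J_1 X^T=(\det X) J_1$ (equivalently $J_1 X^T=(\det X) X^{-1} J_1$ when $X$ is invertible). Applied to the three diagonal blocks of $\gamma J\gamma=J$ it yields three linear equations fixing $\det K,\det L,\det M$ in terms of the $\lambda_i$. Applied to the $(1,2)$ off-diagonal block equation $\lambda_1 J_1 K^T+\lambda_2 K J_1+L J_1 M^T=0$, with $K$ already diagonal, it gives $L=D_1\,M$ with $D_1=-\frac{1}{\det M}\mr{diag}(\lambda_1k_2+\lambda_2k_1,\,\lambda_1k_1+\lambda_2k_2)$, so $L$ is diagonal as soon as $M$ is. The main step is to substitute $L=D_1M$ into the $(2,3)$ equation and, using the commutation rule $J_1 D=D^\ast J_1$ for diagonal $D$ (with $D^\ast$ denoting the swap of the two diagonal entries), collapse it to the form $E M=\lambda_3\,J_1 M J_1$ for an explicit diagonal $E$. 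Reading off the four components shows that, apart from the trivial case $M=0$, the entries of $M$ are locked into a one-parameter family of the shape $M=D\cdot R$ with $D$ diagonal and $R\in SO(2)$.

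Choosing $R_3=R$ then simultaneously diagonalizes both $M$ and (through $L=D_1 M$) $L$, while leaving $K$ untouched; hence all six off-diagonal sub-blocks of $\gamma$ become diagonal, which is the first claim of the theorem. Rewriting in the $xp$-ordered basis gives $\gamma_s=\gamma_x\oplus\gamma_p$ with $\gamma_x$ the symmetric matrix displayed in the statement, and a direct check shows that $\gamma J\gamma=J$ in this basis reads $\gamma_x\gamma_p=\one$, i.e.\ $\gamma_p=\gamma_x^{-1}$. The main obstacle is establishing the ``diagonal times rotation'' structure of $M$, since this is the only step that genuinely uses purity beyond Theorem~\ref{th:GLUequiv}; everything else is linear-algebra bookkeeping. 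One additionally needs to dispose of the degenerate subcases in which $K$ has coincident singular values or some of $K,L,M$ vanish, but there the SVD rotations are not uniquely fixed by the algorithm of Theorem~\ref{th:GLUequiv} and the extra freedom only makes the reduction easier --- which is precisely what extends the result from the ``generic'' pure three-mode states treated in earlier literature to \emph{all} of them.
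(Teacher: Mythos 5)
Your generic-case argument is correct, and the key identities check out: with $K=\mathrm{diag}(k_1,k_2)$, the $(1,2)$ block of $\gamma J\gamma=J$ gives $LJ_1M^T=-\bigl(\lambda_1J_1K+\lambda_2KJ_1\bigr)$, and for invertible $M$ (using $J_1M^T=(\det M)\,M^{-1}J_1$) this is exactly your $L=D_1M$ with $D_1=-\frac{1}{\det M}\mathrm{diag}(\lambda_1k_2+\lambda_2k_1,\lambda_1k_1+\lambda_2k_2)$; substituting into the $(2,3)$ block and commuting diagonal matrices through $J_1$ yields $EM=\lambda_3J_1MJ_1$ with $E$ diagonal, whose componentwise solution is either $M=0$ or a family in which $MM^T$ is diagonal, i.e.\ $M=DR$ with $R\in SO(2)$, so the reserved rotation $R_3=R$ diagonalizes $M$ and $L=D_1M$ simultaneously; the final step $\gamma_p=\gamma_x^{-1}$ is also fine. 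This is the same basic mechanism as the paper's proof in Appendix~\ref{app:A}, organized differently: there the residual $O_3$ freedom is spent up front to make $L$ upper triangular, and an elementwise case analysis of the purity equations shows that all off-diagonal entries of $L$ and $M$ vanish; you instead keep $R_3$ in reserve, derive matrix-structural relations, and rotate at the end, which is arguably more transparent for generic states.

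The gap is in the non-generic cases, which are precisely what distinguishes this theorem from the earlier ``generic-state'' results it is meant to supersede. Your chain $L=D_1M$, $EM=\lambda_3J_1MJ_1$ needs $\det M\neq0$, and pure three-mode states with a singular but nonvanishing off-diagonal block do exist [case (a) of Sec.~\ref{Sec_SpecialCase}, or case (i) of Appendix~\ref{AppB}, where $D_{23}=\mathrm{diag}(0,f)$ with $f\neq0$]; these are covered neither by ``$M$ invertible'' nor by ``some of $K,L,M$ vanish''. Conversely, the degeneracy you do single out --- coincident singular values of $K$ --- is harmless for your argument, since you only need existence, not uniqueness, of the rotations diagonalizing $K$. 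The missing cases can be closed along your own lines: if $L$ is invertible, the same manipulation of the $(1,2)$ block gives $M=-\frac{1}{\det L}\mathrm{diag}(\lambda_1k_1+\lambda_2k_2,\,\lambda_1k_2+\lambda_2k_1)\,L$ and the argument runs with the roles of $L$ and $M$ (and the $(1,3)$ block equation) exchanged; if $\det L=\det M=0$, the $(3,3)$ block equation $\lambda_3^2+\det L+\det M=1$ forces $\lambda_3=1$, so mode 3 is in a pure state and factorizes, $L=M=0$, and nothing remains to diagonalize. These cases should be stated explicitly; as written, your proof establishes the result only for the generic states already treated in the literature that the theorem is intended to go beyond.
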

\begin{proof}
  In Appendix~\ref{app:A} we show that the necessary condition for $\gamma$ to
  correspond to a pure state, $\gamma J\gamma= J$, implies that all
  submatrices, $K,L,M$ have to be diagonal. This implies that pure three-mode
  states can always be brought into a form in which correlations exist only
  among the $X$-quadratures and among the $P$-quadratures, respectively. That
  is, the CM is \emph{$xp$-blockdiagonal} in the standard form i.e.,
  $\gamma=\gamma_x\oplus\gamma_p$ (in the $xp$-ordered basis). Using then that
  the state is pure, which implies the condition $\tilde{J} \gamma \tilde{J}^T
  \gamma=\one$, where $\tilde{J}=[0_{n}, -\one_n;\one_n,0_m]$ is $J$ in the
  $xp$-basis, and $0_n,(\one_n)$ denote the $n\times n$ zero (identity)
  matrix, respectively, it is easy to see that for pure states
  $\gamma_p=\gamma_x^{-1}$, which proves the statement.
\end{proof}

Since the positive real and symmetric matrix $\gamma_x$, can always be written
as $\gamma_x=ODO^T$ for $O$ orthogonal and $D$ real and diagonal, six free
parameters are required to characterize $\gamma_x$. Since in the standard form
both $\gamma_x$ and $\gamma_p$ must have the same diagonal elements, this
yields three constraining equations, leaving 3 parameters characterizing the
equivalence classes. We derive in the next section the conditions on those
parameters to correspond to a valid CM of a pure state.

\subsubsection{Parameterization of pure $1\times 1\times 1$ states}
As we have just seen, an arbitrary pure 3--mode state can be written as
\begin{equation} \label{GammaDiag}
\gamma =\left(\begin{array}{ccc}
\lambda_1 \one &D_{12}& D_{13} \\
D_{12}& \lambda_2 \one& D_{23}\\ D_{13}  & D_{23}&\lambda_{3}
\one\end{array}\right),
\end{equation}
where $D_{ij}$ is diagonal. Due to the condition $\gamma\geq i J$ [see
Eq.~(\ref{iii})], we have $\lambda_i\geq 1$ $\forall i$.

In this section we derive the conditions for $\gamma$ corresponding to a pure
state and show that the CM can be fully parameterized by the three
local-mixedness parameters $\lambda_j$. Recall that $\gamma$ is pure iff
$\gamma \geq 0$ and $\gamma J \gamma=J$. We first derive the necessary and
sufficient conditions for a matrix $\gamma$, as given in Eq.~(\ref{GammaDiag})
with $\lambda_i\geq 1$ to fulfill $\gamma J \gamma =J$ (see Lemma
\ref{lemmapure1}). After that, we derive the condition for such a matrix to be
positive (see Lemma \ref{lemmapure2}).

\begin{lemma} \label{lemmapure1} A matrix $\gamma$, as given in
  Eq.~(\ref{GammaDiag}) with $\lambda_i\geq 1$ fulfills $\gamma J \gamma=J$
  iff the entries of the diagonal matrices
  $D_{ij}=\mbox{diag}(d_{ij}^+,d_{ij}^-)$ are given (up to GLUs) by 
\begin{align}
\label{Eq_entriesD}
d_{ij}^\pm&=\frac{1}{4\sqrt{\lambda_i\lambda_j}}(\sqrt{a_{ij}}\pm
\sqrt{b_{ij}}),
\end{align}
with
\begin{align}
\nonumber
a_{ij}&=[(\lambda_i-\lambda_j)^2-(\lambda_k-1)^2][(\lambda_i-\lambda_j)^2-(\lambda_k+1)^2] \\
\nonumber
b_{ij}&=[(\lambda_i+\lambda_j)^2-(\lambda_k-1)^2][(\lambda_i+\lambda_j)^2-(\lambda_k+1)^2],
\end{align}
where $i\not=j$ and $k\not= i,j$ refers to the third index. 
\end{lemma}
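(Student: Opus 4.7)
The plan is to reduce the matrix equation $\gamma J\gamma = J$ to a system of scalar equations by exploiting the block structure, and then to solve that system explicitly.

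First I would compute $\gamma J\gamma$ block by block using $\gamma_{ii}=\lambda_i\one$ and $\gamma_{ij}=D_{ij}=\mr{diag}(d_{ij}^+,d_{ij}^-)$ for $i\neq j$. A short computation using $D_{ik}J_1D_{kj}=\mr{diag}(d_{ik}^+,d_{ik}^-)J_1\mr{diag}(d_{kj}^+,d_{kj}^-)$, which is always proportional to anti-diagonal $2\times 2$ matrices, collapses each of the nine $2\times 2$ block equations into at most two scalar equations. The three diagonal blocks give
\[\lambda_i^2+d_{ij}^+d_{ij}^-+d_{ik}^+d_{ik}^-=1,\]
and the six off-diagonal blocks yield, for each cyclic permutation $(i,j,k)$ of $(1,2,3)$,
\[\lambda_i d_{ij}^-+\lambda_j d_{ij}^++d_{ik}^+d_{jk}^-=0,\qquad \lambda_i d_{ij}^++\lambda_j d_{ij}^-+d_{ik}^-d_{jk}^+=0.\]

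Solving the three diagonal equations as a linear system in the unknowns $d_{ij}^+d_{ij}^-$ is trivial and immediately produces $d_{ij}^+d_{ij}^-=\tfrac{1}{2}(1+\lambda_k^2-\lambda_i^2-\lambda_j^2)$. Next I would multiply the two off-diagonal equations for the pair $(i,j)$ to obtain
\[\lambda_i\lambda_j\bigl[(d_{ij}^+)^2+(d_{ij}^-)^2\bigr]+(\lambda_i^2+\lambda_j^2)\,d_{ij}^+d_{ij}^-=(d_{ik}^+d_{ik}^-)(d_{jk}^+d_{jk}^-).\]
Substituting the product formula just derived into both sides expresses $(d_{ij}^+)^2+(d_{ij}^-)^2$ as an explicit rational function of $\lambda_1,\lambda_2,\lambda_3$. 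Together with the known product, this pins down $(d_{ij}^+\pm d_{ij}^-)^2$. A direct polynomial manipulation — essentially factoring differences of squares — identifies these two quantities with $a_{ij}/(4\lambda_i\lambda_j)$ and $b_{ij}/(4\lambda_i\lambda_j)$; taking square roots reproduces the stated expression for $d_{ij}^\pm$.

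The genuinely delicate step will be the sign ambiguity. The pairwise multiplication above discards sign information, so I would then go back and check directly that the canonical choice $d_{ij}^\pm=(\sqrt{a_{ij}}\pm\sqrt{b_{ij}})/(4\sqrt{\lambda_i\lambda_j})$ satisfies the original six linear off-diagonal equations, and that any other admissible sign choice is related to this one by a GLU that preserves the standard form $\gamma_{ii}=\lambda_i\one$ with diagonal $D_{ij}$. The only local symplectic matrices compatible with this structure are $\pm\one$ and the $\pi/2$-rotation $J_1$ on each mode: applying $-\one$ to mode $i$ sends $d_{ij}^\pm\mapsto -d_{ij}^\pm$ for all $j$, while applying $J_1$ simultaneously on all three modes swaps $d_{ij}^+\leftrightarrow d_{ij}^-$. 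These exhaust precisely the sign freedom left after the two squaring operations. Sufficiency then follows by direct substitution into all nine scalar equations.

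The main obstacle will be the sign-consistency argument: verifying that a single global choice of $\sqrt{a_{ij}}$ and $\sqrt{b_{ij}}$ (for the three pairs) can be made so that all six linear off-diagonal equations are satisfied simultaneously, and that the finitely many remaining sign patterns truly lie in the GLU orbit described above. The algebraic manipulations in passing from $(d_{ij}^\pm)^2$ to the factored forms $a_{ij},b_{ij}$ are also somewhat lengthy, but they are mechanical once one notices the factorization $a_{ij}=[(\lambda_i-\lambda_j-\lambda_k)^2-1][(\lambda_i-\lambda_j+\lambda_k)^2-1]$ (and the analogous one for $b_{ij}$), which makes the four-linear-factor structure manifest and renders the identification with the computed $(d_{ij}^+\pm d_{ij}^-)^2$ straightforward.
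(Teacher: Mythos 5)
Your reduction to nine scalar equations is exactly the paper's starting point [the system (\ref{Eqgammapure}), restated in Appendix~\ref{AppB}], and your scalar equations agree with it entry by entry; from there, however, you take a genuinely different route. The paper solves the system constructively by sequential elimination, with a case split on whether some $D_{ij}$ is singular, producing explicit but opaque intermediate quantities ($x$, $w$, the discrete parameters $k,l$) and then checking by hand that the finitely many discrete solutions are GLU-equivalent. You instead multiply the two paired linear equations for each $(ij)$, which (together with $|D_{ij}|=\tfrac12(1+\lambda_k^2-\lambda_i^2-\lambda_j^2)$ from the diagonal equations) yields $\lambda_i\lambda_j[(d_{ij}^+)^2+(d_{ij}^-)^2]+(\lambda_i^2+\lambda_j^2)|D_{ij}|=|D_{ik}||D_{jk}|$ and hence $(d_{ij}^++d_{ij}^-)^2=a_{ij}/(4\lambda_i\lambda_j)$ and $(d_{ij}^+-d_{ij}^-)^2=b_{ij}/(4\lambda_i\lambda_j)$ in closed form; I checked these identities and the factorization you quote, and they are correct, so your necessity argument works uniformly, without the paper's singular/nonsingular case distinction. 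What the paper's messier route buys is precisely the part you defer: since each squaring forgets a sign, your argument a priori leaves up to $4^3$ sign patterns per triple, whereas the GLU group preserving the standard form (mode-wise $-\one$ and the simultaneous $\pi/2$ rotation, which you identify correctly) has an orbit of only $8$ elements, so the claim that the surviving patterns ``exhaust precisely'' the GLU freedom cannot follow from the squared equations alone --- one must return to the six unsquared linear equations to kill the spurious patterns (and verify by substitution that the canonical all-plus branch does solve them, which it does). You flag this as the main obstacle and sketch the right way to close it, but it is the one substantive piece of work remaining in your plan, and it is exactly where the paper's Appendix~\ref{AppB} spends its effort; degenerate situations (some $D_{ij}$ singular, $a_{ij}=0$ or $b_{ij}=0$, equal $\lambda$'s) should be checked there as well, since they enlarge the stabilizer rather than obstruct the argument.
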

\begin{proof}
It is straight forward to show that the condition $\gamma J \gamma =J$ is equivalent to the following set of equations,
\begin{subequations}
\bea \label{Eqgammapure}
\lambda_1^2 + |D_{12}|+|D_{13}|=1 \label{mEq1}\\
\lambda_2^2 + |D_{12}|+|D_{23}|=1 \label{mEq2}\\
\lambda_3^2 + |D_{13}|+|D_{23}|=1 \label{mEq3}\\
\lambda_1 D_{12}+\lambda_2 \tilde{D}_{12}+ \tilde{D}_{13}\odot D_{23}=0\label{mEq4}\\
\lambda_1 D_{13}+\lambda_3 \tilde{D}_{13}+ \tilde{D}_{12}\odot D_{23}=0\label{mEq5}\\
\lambda_2 D_{23}+\lambda_3 \tilde{D}_{23}+ \tilde{D}_{12}\odot D_{13}=0\label{mEq6},
\eea
\end{subequations}
where $\odot$ denotes the componentwise multiplication (Hadamard
product). Here, we used the notation $D_{ij}=\mbox{diag}(d_{ij}^+,d_{ij}^-)$,
$\tilde{D}=\sigma_x D \sigma_x$ and that $DJ=J\tilde{D}$, (i.e.,
$\tilde{D}=-JDJ$) for any diagonal matrix $D$ and therefore $DJD=|D|J$. Note
that if $D=\mr{diag}(a,b)$, then $\tilde{D}=\mr{diag}(b,a)$.  In
Appendix~\ref{AppB} we show that those conditions (together with
$\lambda_j\geq1$) are satisfied iff the entries of the diagonal matrices
$D_{ij}=\mbox{diag}(d_{ij}^+,d_{ij}^-)$ are given (up to GLUs) by $d_{ij}^\pm$
as given in the lemma.
\end{proof}

Note that in \cite{ASI06} it has been stated that a generic state can be
written as in Eq.~(\ref{GammaDiag}), with the entries of the diagonal matrices
given in Eq.~(\ref{Eq_entriesD}). However, we are aiming here for a complete
characterization of three--mode pure states. As we prove below, the results of
\cite{ASI06} hold for all pure three-mode Gaussian states.

Clearly $a_{ij},b_{ij}$ must be positive in order to obtain a real CM. This
leads to the (mutually exclusive) conditions
$|\lambda_i-\lambda_j|\leq\lambda_k-1\,\forall (ijk)$ or
$|\lambda_i-\lambda_j|\geq\lambda_k+1\,\forall (ijk)$. We show now that only
the first condition is compatible with the positivity of the reduced CM (at
modes $(ij)$). To see that, note that for pure three-mode states it follows
from Eqs.~(\ref{mEq1}-\ref{mEq3}) that for all $(ijk)$:
$\lambda_k^2=\lambda_i^2+\lambda_j^2+2|D_{ij}|-1=(\lambda_i+\lambda_j+1)^2-2(\lambda_i+\lambda_j+\lambda_i\lambda_j-|D_{ij}|+1)$. The
last term in this expression is strictly negative since due to the fact that
the CM of the modes $i,j$ has to be positive, we have
$\lambda_i\lambda_j\geq\pm|D_{ij}|$, which implies that
$\lambda_k<\lambda_i+\lambda_j+1$. Thus, the conditions
\begin{equation}
  \label{CondPos} 
  \lambda_i+1\leq\lambda_j+\lambda_k\,\forall(ijk).
\end{equation}
are the necessary and sufficient conditions for a valid pure CM $\gamma$ to be
real. Note that if $\lambda_i \geq \lambda_j,\lambda_k$, the conditions in
Eq.~(\ref{CondPos}) are equivalent to the condition $\lambda_i \leq
-1+\lambda_j+\lambda_k$. For later reference, we also note the simple 
expression for $|D_{ij}|$ in terms of the $\lambda$'s:
\begin{equation}
  \label{eq:detDij}
  |D_{ij}|=\frac{1}{2}\left( \lambda_k^3+1-\lambda_i^2-\lambda_j^2 \right).
\end{equation}

It remains to impose the condition that $\gamma\geq0$. For this, we use the
following Lemma (Schur's complement), which is proven for
instance in \cite{GKLC01b}.
\begin{lemma} (Positivity of self-adjoint matrices)\\ \label{positivity}
A self-adjoint matrix
\begin{equation}
M = \left( \begin{array}{cc} A&C\\ C^\dagger &B
\end{array} \right),
\end{equation}
with $B>0$ is positive if
and only if
\begin{equation}\label{poscondb}
A-C\frac{1}{B}C^\dagger\geq0.
\end{equation}
\end{lemma}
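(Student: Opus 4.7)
The plan is to reduce the positivity of the block matrix $M$ to the positivity of two diagonal blocks by exhibiting an explicit congruence transformation, and then invoke the fact that congruence preserves the sign (in particular, positive semidefiniteness) of a self-adjoint matrix. Concretely, I would write down the block LDL-type factorization
\begin{equation*}
M = L \begin{pmatrix} A - C B^{-1} C^\dagger & 0 \\ 0 & B \end{pmatrix} L^\dagger,
\qquad
L = \begin{pmatrix} \one & C B^{-1} \\ 0 & \one \end{pmatrix},
\end{equation*}
and verify it by direct block multiplication (this uses only that $B$ is invertible, guaranteed by $B>0$). Since $L$ is invertible (its determinant is $1$), the map $X \mapsto L X L^\dagger$ is a congruence; self-adjoint matrices related by congruence have the same signature, so $M \geq 0$ if and only if the middle block-diagonal matrix is positive semidefinite, which (given $B>0$) is equivalent to $A - C B^{-1} C^\dagger \geq 0$.

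As an alternative, more hands-on route that avoids explicit block multiplication, I would instead argue via the variational characterization $M \geq 0 \iff v^\dagger M v \geq 0$ for all $v = (x, y)^T$. Expanding,
\begin{equation*}
v^\dagger M v = x^\dagger A x + x^\dagger C y + y^\dagger C^\dagger x + y^\dagger B y,
\end{equation*}
one completes the square in $y$ using $B > 0$ (so $B^{1/2}$ and $B^{-1/2}$ exist). The minimum over $y$ at fixed $x$ is attained at $y = -B^{-1} C^\dagger x$ and equals $x^\dagger (A - C B^{-1} C^\dagger) x$. Hence $M \geq 0$ is equivalent to $x^\dagger (A - C B^{-1} C^\dagger) x \geq 0$ for all $x$, i.e.\ to the Schur complement condition.

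I do not anticipate a real obstacle here: the statement is the standard Schur complement lemma, and the only mild point of care is ensuring $B^{-1}$ exists, which is why the hypothesis $B>0$ (strictly positive, not merely $B \geq 0$) is imposed. If one wished to weaken the hypothesis to $B \geq 0$, an additional argument would be needed to handle the kernel of $B$ (essentially requiring $\ker B \subseteq \ker C$ and replacing the inverse by a pseudoinverse), but under the stated hypothesis both approaches above give the result in a few lines.
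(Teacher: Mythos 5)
Your proof is correct and complete. The paper does not actually prove this lemma itself --- it only cites Ref.~\cite{GKLC01b} --- so there is no in-paper argument to compare against; both of your routes (the block congruence $M=L\,[(A-CB^{-1}C^\dagger)\oplus B]\,L^\dagger$ with unit-triangular $L$, and the completion of the square in $v^\dagger M v$ minimizing over $y$ at fixed $x$) are the standard derivations of the Schur-complement criterion, and either one suffices. Your remark that strict positivity $B>0$ is used only to guarantee invertibility of $B$, and that the case $B\geq 0$ would require $\ker B\subseteq\ker C^\dagger$ and a pseudoinverse, is also accurate and is exactly the level of care the statement calls for.
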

Using this lemma we show that any CM $\gamma$ as given in
Eq.~(\ref{GammaDiag}) is positive in case the condition (\ref{CondPos}) is
satisfied, as stated in the following lemma.

\begin{lemma} \label{lemmapure2}
The symmetric matrix $\gamma$, as given in Eq.~(\ref{GammaDiag}) with $\lambda_k\geq 1$, for $k\in \{1,2,3\}$ is positive semidefinite if Eq.~(\ref{CondPos}) holds.
\end{lemma}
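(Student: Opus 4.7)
The plan is to exploit the $xp$-block decomposition from Theorem~\ref{th:StdForm} together with a short continuity/connectedness argument on the parameter region.

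First I would observe that since all six submatrices $A,B,C,K,L,M$ are diagonal, reordering to the $xp$-basis gives $\gamma = \gamma_x \oplus \gamma_p$, where $\gamma_x,\gamma_p$ are symmetric $3\times 3$ matrices with common diagonal $(\lambda_1,\lambda_2,\lambda_3)$ and off-diagonal entries $d_{ij}^+$ and $d_{ij}^-$ respectively. So it suffices to prove $\gamma_x\geq 0$ and $\gamma_p\geq 0$. Lemma~\ref{lemmapure1} gives $\gamma J\gamma = J$, which in the $xp$-basis reads $\gamma_p\gamma_x=\one$, i.e.\ $\gamma_p = \gamma_x^{-1}$. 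Hence $\gamma_p>0$ iff $\gamma_x>0$, and only the positivity of $\gamma_x$ needs to be established.

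Next I would apply Sylvester's criterion to $\gamma_x$. The first leading minor is $\lambda_1 \geq 1 >0$, trivially. For the second minor I would use the key identity $(\gamma_x^{-1})_{33}=\lambda_3$, which follows from the fact that the diagonal of $\gamma_p = \gamma_x^{-1}$ equals $(\lambda_1,\lambda_2,\lambda_3)$. Writing $(\gamma_x^{-1})_{33}$ as a cofactor divided by the determinant gives the clean relation
\begin{equation*}
  \lambda_1\lambda_2 - (d_{12}^+)^2 \;=\; \lambda_3\,\det\gamma_x,
\end{equation*}
and its cyclic permutations. Thus both nontrivial Sylvester conditions collapse to the single requirement $\det\gamma_x>0$.

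Finally I would establish $\det\gamma_x>0$ by continuity. The parameter region $R:=\{(\lambda_1,\lambda_2,\lambda_3):\lambda_i\geq 1,\ \lambda_i+1\leq \lambda_j+\lambda_k\ \forall (ijk)\}$ is an intersection of half-spaces, hence convex and path-connected. Under Eq.~(\ref{CondPos}) one checks that $a_{ij}, b_{ij}\geq 0$ on $R$, so the formulas (\ref{Eq_entriesD}) make $\gamma_x$ a continuous function of $\lambda\in R$. From $\gamma J\gamma=J$ we have $(\det\gamma)^2=1$, so $\det\gamma=\det\gamma_x\det\gamma_p\neq 0$, forcing $\det\gamma_x\neq 0$ throughout $R$. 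A continuous, nowhere-vanishing function on a connected set has constant sign; evaluating at the base point $\lambda=(1,1,1)$, where $a_{ij}=b_{ij}=0$ and hence $\gamma_x=\one_3$ with $\det\gamma_x=1$, pins the sign to $+$. Therefore $\det\gamma_x>0$ on all of $R$, Sylvester yields $\gamma_x>0$, and consequently $\gamma=\gamma_x\oplus\gamma_x^{-1}>0$.

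The main (minor) obstacle I anticipate is spelling out the continuity step cleanly: one must note that $R$ is connected, that the entries of $\gamma$ depend continuously on $\lambda$ through the square roots $\sqrt{a_{ij}},\sqrt{b_{ij}}$ (which are real on $R$), and that the vacuum point $\lambda=(1,1,1)$ lies in $R$ and realises $\det\gamma_x=1$. A direct Sylvester verification using the nested-radical expressions for $d_{ij}^+$ is in principle possible but algebraically unpleasant; reducing everything to a single determinant condition via the cofactor identity and then invoking continuity keeps the argument short.
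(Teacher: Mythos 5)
Your argument is correct, and its crucial step is genuinely different from the paper's. You share the first reduction: both you and the paper use the $xp$-block structure and the purity relation $\gamma_p\gamma_x=\one$ to reduce everything to $\gamma_x>0$. From there the paper applies the Schur complement with respect to the third mode and then (in Appendix~\ref{AppC}) computes $\tr Y$ and $\det Y$ explicitly in terms of $\lambda_1,\lambda_2,\lambda_3$, proving positivity under \Eqref{CondPos} via the identity $(K_1-1)^2-w_1=64\lambda_1^2\lambda_2^2\lambda_3^2$ — the "tedious but elementary" route. You instead use Sylvester's criterion together with the neat observation that purity fixes the diagonal of $\gamma_x^{-1}$, so the cofactor identity $\lambda_1\lambda_2-(d_{12}^+)^2=\lambda_3\det\gamma_x$ collapses all nontrivial minors to the single condition $\det\gamma_x>0$; this you obtain by a soft topological argument: on the convex region defined by $\lambda_i\geq1$ and \Eqref{CondPos} the entries \Eqref{Eq_entriesD} are real and continuous, the determinant never vanishes because $\gamma J\gamma=J$ (the "if" direction of Lemma~\ref{lemmapure1}) forces $\gamma_x$ invertible, and the sign is pinned to $+$ at the vacuum point $(1,1,1)$ where $\gamma_x=\one$. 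The trade-off: your proof avoids all radical manipulations and is shorter, but it leans on the purity identity holding pointwise on the whole parameter region (including the degenerate cases $|D_{ij}|=0$ and $\lambda_2=\lambda_3$ treated separately in Appendix~\ref{AppB}) and yields no explicit formulas, whereas the paper's computation produces closed expressions ($\tr Y$, $\det Y$, $w_1$) that are reused later — e.g.\ $w_1$ reappears as $B^2$ in the GLOCC lemma. Both establish strict positivity, which is more than the stated semidefiniteness, so the conclusion is fully covered.
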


\begin{proof} 
Since $\gamma=\gamma_x\oplus \gamma_x^{-1}$ (see Theorem~\ref{th:StdForm}), we
have $\gamma >0$ iff $\gamma_x>0$. Using now Lemma \ref{positivity} and the fact that $\lambda_3>0$, we know
that the $3\times 3$ matrix $\gamma_x$ is positive iff the $2 \times 2$ matrix
\begin{equation} 
Y=\begin{pmatrix} \lambda_1 & d_{12}^+\\d_{12}^+ &
  \lambda_2\end{pmatrix}-\frac{1}{\lambda_3} \begin{pmatrix} d_{13}^+\\
  d_{23}^+\end{pmatrix} \begin{pmatrix} d_{13}^+& d_{23}^+\end{pmatrix}>0.
\end{equation}
Note that $Y>0$ iff $|Y|>0$ and $\tr(Y)>0$. Using that $\lambda_k\geq 1$ for all
$k$, tedious, but elementary calculations (see \Aref{AppC}) show that both
expressions are positive if the condition (\ref{CondPos}) holds.
\end{proof}
Combining Lemma \ref{lemmapure1} and Lemma \ref{lemmapure2} we obtain the
following theorem.

\begin{theorem}\label{th:3modeStd}
Any CM of a pure 3--mode Gaussian state can be written (up to GLUs) as
\begin{equation} \label{GammaDiagTh}
\gamma =\left(\begin{array}{ccc}
\lambda_1 \one &D_{12}& D_{13} \\
D_{12}& \lambda_2 \one& D_{23}\\ D_{13}  & D_{23}&\lambda_{3}
\one\end{array}\right),
\end{equation}
where $D_{ij}=\mbox{ diag } (d_{ij}^+, d_{ij}^-)$, with $d^\pm_{ij}$ given in
Eq.~(\ref{Eq_entriesD}).
\end{theorem}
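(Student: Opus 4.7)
The plan is to combine the three ingredients established just above into a short compilation argument, so I would not need to redo any of the hard work.

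First I would appeal to Theorem~\ref{th:StdForm} to reduce an arbitrary pure three-mode CM, up to GLUs, to the $xp$-block-diagonal form of Eq.~(\ref{GammaDiag}) with diagonal blocks $\lambda_i\one$ and diagonal off-diagonal blocks $D_{ij}$. The uncertainty condition (\ref{iii}) applied to each single-mode reduction forces $\lambda_i\geq 1$, which places us exactly in the setting of Lemma~\ref{lemmapure1}. Applying that lemma converts the purity constraint $\gamma J\gamma=J$ into the explicit expressions (\ref{Eq_entriesD}) for the diagonal entries $d_{ij}^\pm$ of $D_{ij}$ in terms of the $\lambda_i$.

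Second, I would check that the resulting entries are actually real. The formulas (\ref{Eq_entriesD}) involve $\sqrt{a_{ij}}$ and $\sqrt{b_{ij}}$, so I need $a_{ij},b_{ij}\geq 0$. A short sign analysis of the two factors defining each of $a_{ij},b_{ij}$ shows that this holds precisely in the two mutually exclusive regimes $|\lambda_i-\lambda_j|\leq\lambda_k-1$ for all $(ijk)$ or $|\lambda_i-\lambda_j|\geq\lambda_k+1$ for all $(ijk)$. The second alternative I would rule out exactly as done in the discussion preceding Eq.~(\ref{CondPos}): positivity of the two-mode reduced CM forces $\lambda_i\lambda_j\geq |D_{ij}|$, and combining this with the purity relations (\ref{mEq1})--(\ref{mEq3}) yields $\lambda_k<\lambda_i+\lambda_j+1$, which is incompatible with $|\lambda_i-\lambda_j|\geq\lambda_k+1$. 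Hence only the triangle-type condition (\ref{CondPos}) survives.

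Finally I would invoke Lemma~\ref{lemmapure2}, which states that under $\lambda_k\geq 1$ and (\ref{CondPos}) the matrix $\gamma$ in (\ref{GammaDiag}) is positive semidefinite, so that together with $\gamma J\gamma=J$ it is a bona fide pure-state CM. Stringing these three steps together gives exactly the claim of Theorem~\ref{th:3modeStd}. The only real obstacle is already encapsulated in Lemma~\ref{lemmapure1} (i.e.\ solving the nonlinear system (\ref{mEq1})--(\ref{mEq6}) for the $d_{ij}^\pm$, which is relegated to Appendix~\ref{AppB}); everything else in this theorem is bookkeeping.
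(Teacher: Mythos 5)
Your proposal is correct and follows essentially the same route as the paper: Theorem~\ref{th:StdForm} to reach the diagonal-block form (\ref{GammaDiag}) with $\lambda_i\geq1$, Lemma~\ref{lemmapure1} to solve the purity condition and obtain (\ref{Eq_entriesD}), the reality/sign analysis leading to (\ref{CondPos}), and Lemma~\ref{lemmapure2} for positivity. The paper's own proof is precisely this combination ("Combining Lemma~\ref{lemmapure1} and Lemma~\ref{lemmapure2}..."), so no gap remains.
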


Thus, the non--local properties of any pure three-mode Gaussian state are
completely characterized by the local mixedness parameters, $\lambda_i$, i.e.,
by the bipartite entanglement shared between each mode with the other two.
Recalling our discussion of LU-- and GLU--equivalence at the beginning of this
Section, we see that (like the pure bipartite Gaussian states) also the pure
$1\times1\times1$ Gaussian states are completely characterized by their
Schmidt coefficients across the (three) different bipartitions (which are in
one-to-one relation with the $\lambda_l$). Therefore, those states are
LU--equivalent iff they are GLU--equivalent and Theorem~\ref{th:3modeStd} also
characterizes the LU--classes of pure three-mode Gaussian states.

\subsubsection{Some special cases}
\label{Sec_SpecialCase}
Let us briefly consider two special cases, namely the one where one of the
off-diagonal matrices, say: $\gamma_{ij}$, is (a) not invertible or (b)
proportional to $\id$.

Case (a): The condition $|D_{ij}|=0$ together with \Eqref{eq:detDij} implies
$\lambda_k^2=\lambda_i^2+\lambda_j^2-1$ and inserting it in
\Eqref{Eq_entriesD} we find that $d_{ij}^-=0$ and
$d_{ij}^+=\sqrt{(\lambda_i^2-1)(\lambda_j^2-1)(\lambda_i\lambda_j)^{-1/2}}$. Note
that $d_{ij}^+\not=0$ (as are $|D_{ik}|,|D_{jk}|$) unless $\lambda_i$ or
$\lambda_j$ equals $1$, in which case the respective mode factorizes and the
remaining two would be in a two-mode squeezed state.

Case (b): $D_{ij}\propto\id$ is only possible if $b_{ij}=0$, which implies that $\lambda_k=\lambda_i+\lambda_j-1$ (i.e., in particular, $\lambda_k\geq
\lambda_i,\lambda_j$ \footnote{Note that
$D_{ij}\propto\id$ implies that the reduced state of modes $i$ and $j$ has a positive semidefinite partial transpose and is therefore separable.}). Clearly, then
$\lambda_k-\lambda_{i(j)}=\lambda_{j(i)}-1$ and thus $a_{ik}=a_{jk}=0$. Hence,
the remaining two off-diagonal blocks are both proportional to $\sigma_z$. It also holds that if $D_{ij}\propto\sigma_z$, which
implies that $a_{ij}=0$ (which fixes $\lambda_k=1+|\lambda_i-\lambda_j|$), that one of the remaining two off-diagonal blocks is degenerate (and the other
$\propto\sigma_z$): If $\lambda_i\geq\lambda_j$ then $b_{jk}=0$ and $a_{ki}=0$
otherwise reversed.  As we see below, these states can all be generated by
letting a beam splitter couple one mode of a two-mode system in a two-mode
squeezed vacuum state with a third mode in the vacuum.

Another interesting special case are the fully permutation \emph{symmetric}
states \cite{LF02,ASI06}, for which the three local mixednesses are identical,
i.e., $\lambda_l=\lambda$ $\forall l$. We denote the CM of a symmetric state
in standard form by $\gamma_\mr{sym}(\lambda)$. These states 
were sometimes
called maximally entangled \cite{AdIl06,ASI06} due to their extremal
entanglement properties reminiscent of their qubit analogs \cite{DVC00}. For
these states the matrices $D_{ij}$ are given by $\diag(d^+,d^-)$ with
\begin{align}
  \label{eq:symmstates}
  d^\pm&=\frac{1}{4\lambda}\left(
    (\lambda^2-1)\pm \sqrt{9\lambda^4 -10\lambda^2+1} \right).
\end{align}
In Sec. \ref{Sec_SymInitial} we investigate which states can be obtained from
symmetric states via Gaussian Local Operations assisted by Classical
Communication (GLOCC).

\section{Generation of three-mode pure states}
\label{SecGeneration}
Let us briefly remark on the generation of pure three-mode states.  In
\cite{Adesso2007} a general state-generation scheme for this case is
presented. There, a two-mode squeezed state (of modes 1 and 2, with squeezing
parameter $r$) is coupled to mode 3 (in the vacuum state) by a sequence of
three beam splitters (BS) acting on modes (13), (23), and (13),
respectively. The transmissivities of the third BS is fixed while those of the
first two are adjusted such as to produce the desired local purities.

Note that in
the special case in which one of the off-diagonal
matrices is degenerate (case (b) above), a simplified scheme suffices: Letting
a beam splitter
with transmissivity $\cos^2\theta\in[0,1]$ act on
part of a two-mode squeezed vacuum
(with squeezing parameter $s\geq0$) and a vacuum mode allows to generate all
states with degenerate CM: If $\lambda_1$ is the largest local mixedness, then
\begin{equation}
  \label{eq:degState}
\gamma(s,\theta)=B(\theta)\left[ \gamma_\mr{tms}(s)\oplus\id
\right]B(\theta)^T,
\end{equation}
where
\[B(\theta)=\id\oplus \left( \begin{array}{cc}
    \cos\theta\id_2&\sin\theta\id_2\\ -\sin\theta\id_2&\cos\theta\id_2
  \end{array} \right)\]
and $\gamma(s,\theta)$ then has the three local purities
$\lambda_1=\cosh s, \lambda_2=\sin^2\theta+\cos^2\theta\cosh s,
\lambda_3=\cos^2\theta+\sin^2\theta\cosh s$,
satisfying the characteristic equation $\lambda_1+1=\lambda_2+\lambda_3$
of Case (b) above. And since for any given $\lambda_2,\lambda_3\geq1$ there is
a pair $(s,\theta)\in\RR^+\times[0,2\pi]$ such that the above equations
hold, we can generate all degenerate states this way. Since these states are
obtained from a two-mode squeezed state by distributing it via a beam splitter
among two parties, we also refer to them as \emph{distributed two-mode squeezed
  states}. 

In order to see how the different GLU classes relate to each other we now
extend the set of operations from Gaussian local unitaries to Gaussian
(stochastic) local operations with classical communication (GLOCC). In
particular, this will allow us to investigate whether the GHZ/W states are
\emph{maximally entangled} also in the sense that they allow to prepare any
other Gaussian state via GLOCC (in the same way as, e.g., the Bell state does
for two qubits or certain families of states do in the pure multi-qubit
setting \cite{VSK13}).

\section{Gaussian Local Operations}
\label{SecGLOCC}

LU equivalence leads to a very detailed classification of multipartite states
with a continuum of inequivalent classes. A more coarse-grained picture
emerges if interconvertibility of states under a larger class of
transformations, stochastic local operations and classical communication
(SLOCC) \cite{BPR+99} is studied. SLOCC plays an important role in
entanglement theory \cite{Nie99,Vid99b,DVC00}. Two states are said to by
SLOCC-equivalent if there is a non-zero probability to convert them into each
other. Due to the stochastic interconvertibility of all pure bipartite states
of equal Schmidt rank \cite{DVC00} there are $d-1$ different kinds of
bipartite (pure state) entanglement of $d$-dimensional systems. In contrast,
in the tripartite case, even for three qubits two inequivalent classes have
been identified that are not connected by SLOCC transformations \cite{DVC00}.

Also in the Gaussian setting, GLU operations can be extended by allowing for
local (generalized) measurements, namely adjoining additional modes (in a pure
state) and then performing (partial) Gaussian measurements.  However, Gaussian
SLOCC have not been investigated since the only Gaussian operators with a
bounded inverse are the Gaussian unitaries \footnote{However, one might note
  that the Gaussian operator $G_{-c}=e^{- c(X^2+P^2)}, c>0$ has an ``inverse''
  $G_{+c}=e^{+ c(X^2+P^2)}$ which is unbounded, but well-defined on the image
  of $G_{-c}$ (i.e., on sufficiently fast decaying states). This might be used
  to argue for the ``Gaussian SLOCC equivalence'' of, e.g., two-mode squeezed
  states with different squeezing parameters $r<r'$ (which can be
  ``stochastically'' mapped to each other by are related by $G_{\pm c}$ with
  $e^c=\frac{\tanh r'}{\tanh r}$). However, such unbounded filtering
  operations have no clear physical implementation and we do not pursue this
  further here.}.  Instead, we are interested here in the convertibility of
states under Gaussian LOCC (GLOCC). In light of the complicated structure of
general LOCC transformations \cite{CLM+12} the Gaussian case is remarkably
simple: all Gaussian operations can be characterized via the
Choi-Jamio{\l}kowski (CJ) isomorphism by an equivalent Gaussian state
\cite{GiCi02,Fiu02,Holevo2011}. When acting on a Gaussian state with
\emph{known} CM, all such transformations can be implemented
\emph{deterministically} by teleporting through that state
\cite{GiCi02}. While teleportation is probabilistic (yielding a random
displacement), this can be computed from the measurement outcome and the
involved CMs and can then be undone by local unitaries. In particular, this
implies, that a finite number of communication rounds is enough to implement
any GLOCC. Note that the inverse of a GLOCC is not Gaussian, hence GLOCC does
not induce an equivalence relation among Gaussian states but rather gives rise
to a partial ordering (see \cite{GECP03} for the bipartite case).

Gaussian operations mapping pure states to pure states (``pure operations'')
are characterized by a pure CJ-CM $\Gamma$ and pure operations on a single
mode are characterized by a pure $1\times1$ CM, i.e., by one GLU-invariant
parameter $r$ (two-mode squeezing) and two sets of three local parameters
(each characterizing a single-mode Gaussian unitary), which describe local
unitary pre- and post-processing of the state. Following the treatment in
\cite{GECP03} for the bipartite case we can easily obtain expressions for the
output CM of a three-mode state after a general three-mode GLOCC.

\subsection{General GLOCC on three-mode systems}

The most general Gaussian operation transforming a three mode Gaussian state
into another, corresponds to a six mode CJ-CM
$\Gamma=[\Gamma_1,\Gamma_{12};\Gamma_{12}^T, \Gamma_2]$. Here, the index 2 (1)
denotes the three input (output) modes respectively. According to
\cite{GiCi02} the output CM $\gamma'$ is related to the input CM $\gamma$ by
\begin{equation}
  \label{eq:GLOCCgeneral}
  \gamma'=\Gamma_1-\Gamma_{12}\left( \Gamma_2 + \Lambda\gamma\Lambda
  \right)^{-1}\Gamma_{12}^T,
\end{equation}
where $\Lambda=\oplus_{x=1}^3\sigma_z$.
For ease of notation we
denote the three diagonal blocks of the input-CM $\gamma$ by $A_x,
x=1,2,3$ and use the convention that indices $(x,y,z)$ in a single equation
refer to distinct modes. In the case of pure LOCC transformations, the CM
$\Gamma$ is block diagonal, i.e.
$\Gamma = \oplus_{x=1}^3\Gamma_x$ with
\begin{align}
\Gamma_x&=\left( \begin{array}{cc}\Gamma_{1x}&\Gamma_{12x}\\
\Gamma_{12x}^T&\Gamma_{2x}
\end{array} \right)\\
&\equiv (S_{1x}\oplus S_{2x})\gamma(r_x)(S_{1x}\oplus S_{2x})^T.
\end{align}
Using that $S_{1x}$ only describes local unitary postprocessing (which is
irrelevant for GLU-invariant properties) we can without loss of generality
take $S_{1x}=\id$. We write the Euler decomposition \cite{ADMS95b} of
$S_{2x}=O_{1x}Q_xO_{2x}$ where $O_{1x},O_{2x}$ are in $\mr{SO}(2)$ and
$Q_x=\mr{diag}(q_x,q_x^{-1})$ with $q_x>0$ is a single-mode squeezing
  transformation. Since the effect of $O_{2x}$ can be undone by local unitary
  postprocessing, we set $O_{2x}=\id$ and obtain
\begin{align}
  \label{eq:3}
  \Gamma_{1x}&=\cosh r_x\id,\\
\Gamma_{2x}&=\cosh r_x O_{1x}Q_x^2O_{1x}^T,\\
\Gamma_{12x}&=\sinh r_x \sz Q_xO_{1x}^T,
\end{align}

To make the ensuing expressions shorter, we will from now on use the
  notation $\cosh r\equiv \ch r$ and $\sinh r\equiv \sh r$.\\
 Using the
Schur-complement formula for the inverse of a symmetric matrix, this allows to
write the reduced CMs of the output state $\gamma'$ at mode $x$ in compact
form as
\begin{align}
  \label{eq:4}
(\gamma')_{xx}&=
\Gamma_{1x}-\Gamma_{12x}\left( \Gamma_{1x}+A_x-T_{x}\right)^{-1}\Gamma_{12x}^T,\\
\end{align}
where we have introduced the auxiliary matrices
\begin{align}
  \label{eq:5}
T_x &=\left( D_{xy}\,\,\, D_{xz} \right) R_x^{-1} \left(
  \begin{array}{c} D_{xy}\\ D_{xz}
\end{array}\right),\\
  \label{eq:6}
  R_x &=
\left( \begin{array}{cc}
\Gamma_{2y} + A_y & D_{yz}\\
D_{yz} & \Gamma_{2z} + A_z
\end{array} \right).
\end{align}
Note that the identity operation corresponds to the limiting case
of infinitely squeezed CJ-CM $\Gamma$ (i.e., $r\rightarrow \infty$ and
$O_1=X=\one$). Hence, the case of not operating on mode $x$ corresponds to
taking the limit $r_x\rightarrow \infty$ in the above expressions. Since the
expressions in terms of the 9 pure GLOCC and 3 CM parameters is rather long
and intransparent, we split the general three-mode GLOCC into a sequence of
three single-mode transformations. Sometimes we will focus on a much simpler
family of transformations [which we refer to as (local) TMS filtering], namely
those, where $O_{1x}=Q_x=\id\forall x$ (i.e., no unitary pre-processing),
leaving only the three two-mode squeezing parameters free. In the bipartite
case, GLOCC of that form are know to suffice to perform all possible
transformations between GLU equivalence classes.

Here, our aim is not the complete analysis of the GLOCC-transformations of
three-mode pure Gaussian states but only an illustration of the usefulness of
the GLU classification and standard form. In particular, we use the standard
form derived in the previous section to study which pure three-mode states,
can be transformed into each other by GLOCC. We first provide simple
expressions for the single-mode transformations of three-mode states which we
then use in the subsequent sections to show that the CV GHZ/W states lack
certain properties of maximal entanglement. To this end we first show that
there are pure Gaussian three-mode states that cannot be obtained from any
$\gamma_\mr{sym}$ via GLOCC by identifying a qualitative feature that the
symmetric states lack and that cannot be generated by GLOCC. Then we identify a one-parameter family of such states
(unreachable from $\gamma_\mr{sym}(\lambda)$) that, in contrast, allows to
reach all symmetric states.

\subsection{GLOCC-transformation of $1\times1\times1$ states}

As we have seen before, a three-mode pure Gaussian state is completely
characterized by its local mixedness parameters, $\lambda_i$. Therefore, we
simply write $(\lambda_1,\lambda_2,\lambda_3)$ when referring to the CM given
in Eqs.~(\ref{GammaDiag},\ref{Eq_entriesD}). Here we derive a compact
prescription of how the CM of a Gaussian state changes under single-mode GLOCC
and, in particular, give expressions for the matrices determining the local
mixedness parameters $\lambda_i$.

Let us denote the $1\times2$ CM of
the three-mode Gaussian input state $(\lambda_1,\lambda_2,\lambda_3)$ by
\begin{equation}\label{eq:inputCM}
\gamma = \left( \begin{array}{cc} A & K\\
K^T & R
\end{array} \right),
\end{equation}
where $A=\lambda_i \one$, $R=\left( \begin{array}{cc} A_j&D_{jk}\\
    D_{jk}&A_k
\end{array} \right)$, with $A_l=\lambda_l \one$, and
$K=(D_{ij}\,\, D_{ik})$, where the block $A$ refers to the mode $i$ to be
acted upon.

As mentioned above, Gaussian completely positive maps (CPMs) acting on a
single mode and mapping pure states to pure states are in one-to-one
correspondence to pure two-mode Gaussian states by the CJ isomorphism
\cite{Cho72,Jam72,GiCi02,Holevo2011}; they are GLU-equivalent to a
two-mode squeezed state and can therefore be completely characterized by the
$1\times1$ CM $\Gamma$  [see Eq.~(\ref{auxgamma2})]
\begin{equation}
  \label{eq:1modeCPM}
  \Gamma = \left(\begin{array}{cc} \Gamma_1&\Gamma_{12}\\ \Gamma_{12}^T&\Gamma_2 \end{array}\right)\equiv
(S_1\oplus S_2)\gamma(r)(S_1\oplus S_2)^T,
\end{equation}
with symplectic $S_1,S_2$. As discussed in the previous section, we can
without loss of generality choose $S_1=\id$ and $S_2=O_1^TX^{-1}$ with $X=\diag(x,x^{-1})$.

If the CPM corresponding to $\Gamma$ acts on mode $i$ of the state with CM
$\gamma$ of \Eqref{eq:inputCM}, it is transformed to
$\gamma'$ with \cite{GECP03}
\begin{align}
\label{eq:Aprime}
A'&=\Gamma_1-\Gamma_{12}(\Gamma_2+\sz A\sz)^{-1}\Gamma_{12}^T,\\
\label{eq:Rprime}
R'&=R-K^T\sz(\Gamma_2+\sz A\sz)^{-1}\sz K,\\
\label{eq:Kprime}
K'&=\Gamma_{12}(\Gamma_2+\sz A\sz)^{-1}\sz K.
\end{align}
To characterize the output state only the three $2\times 2$
diagonal blocks of $\gamma'$ are of interest. We have:
\begin{align}
\label{GLOCCtrA}
  A_i'&=\ch r\id -\sh^2r(\ch r\id +\lambda_iX^2)^{-1},\\
\label{GLOCCtrB}
A_j'&= \lambda_j\id - T_j(\ch r\id+\lambda_iX^2)^{-1}T_j^T,\\
\label{GLOCCtrC}
A_k'&= \lambda_k\id - T_k(\ch r\id+\lambda_iX^2)^{-1}T_k^T,
\end{align}
where $T_l=D_{il}O_1X$ for $l=j,k$. Clearly, up to GLU the final state depends
only on the parameters $r,x,\phi$, where $O_1=e^{i\phi\sigma_y}$. Note that
these expressions could be obtained from Eq.~(\ref{eq:4}) in the
limit $r_y,r_z\to\infty$.

We now use the GLOCC formalism to explore the entanglement properties of
certain families of pure three-mode Gaussian states, in particular the
symmetric states $\gamma_\mr{sym}(\lambda)$. For large $\lambda$ these are
highly entangled states and they have been referred to as 
as maximally entangled continuous-variable states. We show,
however, in the next subsection that in contrast to what one might expect, it
is \emph{not} possible to prepare by GLOCC an arbitrary pure three-mode
Gaussian state from a state $\gamma_\mr{sym}(\lambda)$ no matter how large
$\lambda$. In contrast, we study in the final subsection a different family,
and show that it allows to prepare, in particular, all symmetric states.

\subsection{Symmetric initial states}
\label{Sec_SymInitial}
We show now that it is not possible to reach an arbitrary three-mode entangled
state via GLOCC from a symmetric three-mode entangled state. To this end, we first show that a state can be generated from a symmetric
state $\gamma_\mr{sym}$ by a single-mode
GLOCC if and only if it can be generated (possibly from some other symmetric
initial state)  via a single-mode TMS operation [i.e., $X=O_1=\one$ in
Eqs.~(\ref{GLOCCtrA}-\ref{GLOCCtrC})] and that any state
$(\lambda_1',\lambda_2',\lambda_2')$ with
$\lambda \geq \lambda_2'\geq \lambda_1'$ can be generated in this way.
Then we show that starting with a state $(\lambda_1',\lambda_2',\lambda_2')$ a
general measurement on the second mode allows only to reach states with
$|D_{12}|=(\lambda_3'^2-\lambda_1'^2-\lambda_2'^2 +1)/2 \leq 0$. After that,
we show that performing a general GLOCC on the third mode cannot change the
sign of this determinant. Consequently, a pure three-mode Gaussian state with
$|D_{12}|>0$ cannot be prepared by general GLOCC starting from an (arbitrary)
symmetric Gaussian state. In order to show that it is not in general the case
that the sign of the determinants of the off--diagonal blocks cannot be
changed via GLOCC, we present in the subsequent subsection a class of states
with one positive and two negative determinants, from which states with three
negative determinants can be obtained via GLOCC.

Let us first show that from a symmetric state with parameter $\lambda$ and
operating on mode 1 only, we obtain $(\lambda_1',\lambda_2',\lambda_2')$ with
$\lambda\geq\lambda_2'\geq\lambda_1'$. Then we show that \emph{any} ratio
  $\lambda_2'/\lambda_1'\geq1$ can be obtained by a suitable TMS-operation
  and suitable choice of the initial $\lambda$.\\
  From \Eqref{GLOCCtrA} we see that $\lambda_1'$ does not depend on $O_1$ and
  takes a global maximum for $x=1$. Since a TMS-operation yields
  \begin{align}\label{eq:l1p}
  \lambda_1'&=\frac{\lambda\ch r+1}{\lambda+\ch r},
\end{align}
which can take all values in $[1,\lambda]$ restricting to these operations
does not constrain $\lambda_1'$.\\
With \Eqref{GLOCCtrB}, one readily checks that $\lambda_2'$ is minimal
\footnote{$\lambda_2'$ is periodic with the angle $\phi$ with period $\pi$ and
  monotonically increases in the interval $[0,\pi]$.}  for $O_1=\id$. Thus
$\phi\not=0$ only increases the ratio $\lambda_2'/\lambda_1'$. Since as we see
below \emph{all} such ratios $\geq1$ can be obtained by TMS operations, we can
set $\phi=0$.  Looking now at $\lambda_2'/\lambda_1'$ for the case $\phi=0$,
we easily see that it is $\geq1$ \footnote{We have (with $c_x=x^2+x^{-2},
  s_x=x^2-x^{-2}$): 
  \[\frac{\lambda_2'^2}{\lambda_1'^2} = \frac{\lambda^2(1+\ch^2
    r)+\frac{3\lambda^4+6\lambda^2-1}{8\lambda}\ch
    rc_x+\frac{\lambda^2-1}{8\lambda}\sqrt{9\lambda^4-10\lambda^2+1}s_x}{\lambda^2\ch^2 
    r+c_x\lambda\ch r}.\] The
  difference of numerator and denominator is
  $\lambda^2-1+\frac{(\lambda^2-1)(3\lambda^2+1)}{8\lambda}\ch r
  c_x+\frac{(\lambda^2-1)\sqrt{9\lambda^4-10\lambda^2+1}}{8\lambda}\ch r s_x$,
  which is positive since $c_x\geq|s_x|$ and
  $(3\lambda^2+1)^2>9\lambda^4-10\lambda^2+1$.}  Note that this is expected
since the GLOCC (a partial measurement) is performed at mode 1 and thus our
lack of knowledge about the local state there
is less than at the unmeasured modes.\\
To complete the proof we have to show that all such ratios can be achieved by
TMS operations. The parameter $\lambda_2'$ after such a GLOCC is
\begin{align}\label{eq:l2p}
\lambda_2'&=\frac{\left[
\lambda^2(\ch^2 r+1)+\frac{3\lambda^4+6\lambda^2-1}{4\lambda}\ch
r\right]^{1/2}}{\lambda+\ch r}.
\end{align}
That $\lambda_2'\geq\lambda_1'$ is easily seen using that $\lambda\geq 1$,
which implies that the first term in the numerator is larger than or equal to
$\lambda^2 \ch r^2 +1$ and the second term is larger than or equal to $2
\lambda \ch r$.

To see that \emph{all} such pairs $(\lambda_1',\lambda_2')$ can be
achieved by suitable choice of the initial parameter $\lambda$ and
operation-parameter $r$
we can invert Eqs.~(\ref{eq:l1p},\ref{eq:l2p}) to find $r,\lambda$ as a function of the target
values $\lambda_1'$ and $f\geq0$ which determines the ratio
$\lambda_2'/\lambda_1'$
via
\[
\left( \frac{\lambda_2'}{\lambda_1'} \right)^2=1+f.
\]
We find
\begin{subequations}
    \label{eq:10}
\begin{align}
\lambda &= \left[
  \frac{(3+4f)\lambda_1'^2+1}{6\lambda_1'}+\sqrt{\left(\frac{(3+4f)\lambda_1'^2+1}{6\lambda_1'}\right)^2+\frac{1}{3}}\right],\\
  \ch r &= \frac{\lambda\lambda_1'-1}{\lambda-\lambda_1'}.
\end{align}
\end{subequations}
One readily checks that the values of $\lambda, r$ are in the admissible range
($\lambda\geq1,r\geq0$) for all valid target values $\lambda_1'\geq1,
f\geq0$, which proves the statement.

However, it is not possible to obtain all pure three-mode Gaussian states from
a symmetric initial state, not even by the most general GLOCC. This follows
from the fact that the symmetric states all have the property that the three
off-diagonal matrices $D_{ij}$ all have non-positive determinants
$|D_{ij}|\leq0$. As we show in the following Lemma, this is a property that
cannot be changed by GLOCC. However, we have already encountered states [such
as the distributed two-mode squeezed states $\gamma(s,\theta)$,
cf.\ \Eqref{eq:degState}] which have one non-positive determinant. These,
therefore, cannot be reached by GLOCC from the symmetric states.

\begin{lemma}
  It is impossible with GLOCC to transform a pure three-mode Gaussian state
  with three non-positive determinants $|D_{ij}|\leq0$ into a state with at
  least one (strictly) positive determinant.
\end{lemma}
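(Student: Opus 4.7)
The plan is to reduce to a single pure single-mode GLOCC, settle the ``spectator'' off-diagonal block by a separability argument, and close the remaining two blocks by a direct computation.

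Since any GLOCC factors---up to output GLUs---as a composition of three pure single-mode GLOCCs via the CJ-CM decomposition recalled in Sec.~\ref{SecGLOCC}, it suffices to show that a single pure single-mode GLOCC preserves the property ``all three $|D_{ij}|\leq 0$''. By the mode-labelling symmetry of the statement I may take the operation to act on mode 1, with parameters $(r,x,\phi)$ as in \Eqref{eq:3}. Eqs.~(\ref{GLOCCtrA})--(\ref{GLOCCtrC}) together with $\lambda_l'=\sqrt{|A_l'|}$ and \Eqref{eq:detDij} then express each output determinant $|D_{ij}'|=\tfrac{1}{2}(1+\lambda_k'^2-\lambda_i'^2-\lambda_j'^2)$ explicitly in terms of the input $\lambda_l$, the entries $d_{ij}^\pm$ fixed by \Eqref{Eq_entriesD}, and $(r,x,\phi)$.

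For the block $D_{23}$ (the pair not involving mode 1) I would argue conceptually. By Simon's PPT criterion applied to the reduced two-mode state together with \Eqref{eq:detDij}, $|D_{23}|\leq 0$ is equivalent to non-separability of $\rho_{23}$. The pure GLOCC on mode 1 can be implemented as a Gaussian instrument---a Gaussian measurement on mode 1 followed by outcome-dependent feed-forward displacements---whose conditional CM on $(1',2,3)$ is outcome-independent while only the displacement depends on the outcome. The reduced CM on $(2,3)$ of this conditional state equals $\gamma'_{23}$, and the input $\rho_{23}$ is then a Gaussian convex combination of displaced copies of the corresponding conditional two-mode state. Since separability of Gaussian states is invariant under displacement and preserved by convex combinations, strict separability of the conditional would force separability of the input, contradicting $|D_{23}|<0$; the boundary case $|D_{23}|=0$ follows from continuity of the GLOCC-induced map on the parameter space. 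Hence $|D_{23}'|\leq 0$.

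The main obstacle is the two remaining blocks $D_{1'2}$ and $D_{1'3}$, for which the analogous convex argument fails---a Gaussian filter on one side of a two-mode Gaussian state can in general destroy bipartite entanglement. Here I would combine two ingredients: (i) the LOCC-monotonicity $\lambda_l'\leq\lambda_l$ for every $l\in\{1,2,3\}$, which holds because tripartite GLOCC is a subset of bipartite LOCC across each cut $l\vert\text{rest}$ and $\lambda_l$ is a monotone of the entanglement entropy across that cut; and (ii) a direct calculation of $\lambda_1'^2+\lambda_j'^2-\lambda_k'^2-1$ for $(j,k)=(2,3)$ and $(3,2)$, using Eqs.~(\ref{GLOCCtrA})--(\ref{GLOCCtrC}) and eliminating $d_{ij}^\pm$ via \Eqref{Eq_entriesD}. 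The target is a multivariate polynomial inequality in $(\lambda_1,\lambda_2,\lambda_3,r,x,\phi)$ that must close using the three input conditions $|D_{ij}|\leq 0$ together with the pure-state positivity constraint \Eqref{CondPos}. I expect the algebra to simplify by first handling the pure-TMS sub-case ($\phi=0$, $x=1$, where each $A_j'$ is diagonal and the $2\leftrightarrow 3$ symmetry is manifest) and then treating $(x,\phi)\neq(1,0)$ as a controlled perturbation of the correction $T_j(\ch r\,\id+\lambda_1 X^2)^{-1}T_j^T$ in \Eqref{GLOCCtrB} via a Schur-complement estimate.
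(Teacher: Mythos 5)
Your proposal has a genuine gap, and moreover it misplaces the difficulty. The two blocks you declare to be ``the main obstacle'', $D_{1'2}$ and $D_{1'3}$ (those containing the measured mode), are never actually proven: you offer a plan (``I expect the algebra to simplify\dots treating $(x,\phi)\neq(1,0)$ as a controlled perturbation\dots via a Schur-complement estimate'') rather than an argument, and your ingredient (i), $\lambda_l'\leq\lambda_l$, cannot by itself close it, since the required statements are relations among all three output parameters, e.g.\ $\lambda_1'^2+\lambda_j'^2\leq 1+\lambda_k'^2$, on which the separate monotonicity of each $\lambda_l$ imposes nothing. Ironically, this is the trivial part of the paper's proof: carrying out your ``direct calculation'' one finds [cf.\ Eqs.~(\ref{eq:C1},\ref{eq:C2}), with the mode labels permuted] that the output determinants of the two blocks containing the measured mode equal the \emph{input} determinants of the same blocks multiplied by the manifestly non-negative factor $x^2\sh^2r/[(\lambda_1+x^2\ch r)(\ch r+\lambda_1 x^2)]$, so their signs are preserved exactly and no estimate is needed. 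As submitted, the lemma is therefore not established.

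Your treatment of the spectator block $D_{23}$ is, in contrast, aimed at precisely the part where the paper works hardest [the expression \Eqref{eq:C3}, the sum rule $C_1+C_2+C_3\leq0$ and the limit $x\to\infty$], and the idea is sound and genuinely different: the conditional CM of modes $(2,3)$ after a Gaussian measurement on mode 1 is outcome-independent and equals the output block, so the input $\rho_{23}$ is a mixture of displaced copies of the output reduced state; hence output separable implies input separable, i.e.\ input entangled implies output entangled. But two steps are asserted rather than proven: (i) the equivalence ``$|D_{23}|<0\Leftrightarrow\rho_{23}$ entangled'' does \emph{not} follow from Simon's criterion plus \Eqref{eq:detDij} alone---it also needs the global purity, via $\det\gamma_{23}=\lambda_1^2$, after which Simon's condition $\det A+\det B-2\det C\leq 1+\det\gamma_{23}$ collapses exactly to $|D_{23}|\geq0$; and (ii) the boundary case $|D_{23}|=0$ is dismissed by ``continuity'', which requires showing that such states are limits of admissible pure states with $|D_{23}|<0$ inside the region \Eqref{CondPos} and that the output determinant depends continuously on the input through \Eqref{Eq_entriesD} and Eqs.~(\ref{GLOCCtrA}--\ref{GLOCCtrC}). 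With (i) and (ii) filled in, your separability argument would be an attractive conceptual replacement for the paper's limiting argument on the spectator block---but only once the (actually easy) blocks involving the measured mode are handled, which your proposal leaves undone.
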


\begin{proof}
  We consider an arbitrary initial state with $|D_{ij}|\leq0$ for all $(ij)$,
  i.e., cf.~\Eqref{eq:detDij}, $\lambda_i^2-\lambda_j^2-\lambda_k^2 +1
  \leq0\,\forall (ijk)$ and apply an arbitrary measurement on the $k$th
  mode. Without loss of generality, we choose $(ijk)=(123)$. Similarly to
  before, we obtain for the matrices in the diagonal of $\gamma$
  \begin{align}
A_1'&=\lambda_1\one -   T_1(\ch{r}\one + \lambda_3 X^2)^{-1}T_1^T\\
A_2'&= \lambda_2 \one -
T_2(\ch{r}\one + \lambda_3 X^2)^{-1}T_2^T,\\
A_3'&= \ch{r} \one - \sh{r}^2 (\ch{r}\one + \lambda_3 X^2)^{-1}
\end{align}
with $T_1 = D_{13} O_1 X$ and $T_2 = D_{23} O_1X $. Again, we consider the term
  $C_3\equiv\lambda_3'^2-\lambda_2'^2-\lambda_1'^2+1$, which now yields a more lengthy expression
  \begin{equation}
    \label{eq:C3}
C_3=\frac{[c_xA+Bs_x\cos(2\phi)]\ch r+C}{4\lambda_3(\lambda_3^2+\ch^2
  r+c_x\lambda_3\ch r)}
  \end{equation}
  \begin{align}
    \label{eq:7}
    A&=\lambda_1^4-2 \left(\lambda_2^2+\lambda_3^2+1\right) \lambda_1^2+\lambda_3^4+\left(\lambda
   _2^2-1\right)^2\\ \nonumber
&{}-2 \left(\lambda_2^2-3\right) \lambda_3^2,\\
B&=\left[\left(\lambda_1-\lambda_2-\lambda_3-1\right)
  \left(\lambda_1-\lambda_2-\lambda_3+1\right)\times\right.\\ \nonumber
&{}\times \left(\lambda_1+\lambda_2-\lambda_3-1\right)
\left(\lambda_1+\lambda_2-\lambda_3+1\right)\times\\ \nonumber
&{}\times \left(\lambda_1-\lambda_2+\lambda_3-1\right)
\left(\lambda_1-\lambda_2+\lambda_3+1\right)\times\\ \nonumber
&{}\left.\times \left(\lambda_1+\lambda_2+\lambda_3-1\right)
   \left(\lambda_1+\lambda_2+\lambda_3+1\right)\right]^{\frac{1}{2}},\\
C&=4 \lambda_3(\ch r^2+1) (\lambda_3^2-\lambda_1^2-\lambda_2^2+1),\\
\label{eq:cx} c_x&=x^2+x^{-2},\\
s_x&=x^2-x^{-2}.
  \end{align}
  Since the denominator, $\ch r$, and $B$ are positive \footnote{Note that $B$
    is equivalent to the square root of $w_1$ given in Appendix C, which is
    necessarily positive.}, to maximize this expression, $\cos2\phi$ should
  have maximal modulus and the same sign as $s_x$; i.e., without loss of generality we can take
  $x>1$ and $\phi=0$. Note also, that $C<0$ by assumption since the state $(\lambda_1,\lambda_2,\lambda_3)$ has $|D_{12}|<0$. To show finally that the whole numerator is
  always negative, we consider the expression for the other two determinants
  $|D_{ij}|$, namely $C_2=\lambda_2'^2-\lambda_1'^2-\lambda_3'^2+1$ and
  $C_1=\lambda_1'^2-\lambda_2'^2-\lambda_3'^2+1$, which are
\begin{align}
  \label{eq:C1}
  C_1&=\frac{\left(\lambda_1^2-\lambda_2^2-\lambda_3^2+1\right) x^2 \sh^2r}{\left(\lambda_3+x^2 \ch r\right) \left(\ch r+\lambda_3 x^2\right)},\\
  \label{eq:C2}
C_2&=\frac{\left(\lambda_2^2-\lambda_1^2-\lambda_3^2+1\right) x^2 \sh^2r}{\left(\lambda_3+x^2 \ch r\right) \left(\ch r+\lambda_3 x^2\right)},
\end{align}
i.e., up to a positive factor they are given by the \emph{input} determinants
$|D_{23}|$ and $|D_{13}|$, which thus do not change sign and remain non-positive.
Clearly $C_1+C_2+C_3=3-\lambda_1'^2-\lambda_2'^2-\lambda_3'^2\leq0$. This relation
must hold for all choices of $x$ and $r$. Now consider the limit $x\to\infty$,
for which both $C_1,C_2\to0$, and
\begin{equation}
  \label{eq:C3limit}
  C_3\to\frac{A+B}{4\lambda_3^2},
\end{equation}

which must therefore be $\leq0$. Hence $A+B\leq0$, therefore $(Ac_x+Bs_x)\ch
r+C\leq (A+B)c_x\ch r+C\leq C<0$ which shows that all three determinants
remain non-positive. This proves that
  a single-mode GLOCC does not allow to transform a state with only
  non-positive off-diagonal
  determinants $|D_{ij}|<0$ into a final state with at least one positive
  determinant. Since any pure GLOCC operation is represented by a product
  CJ-state and can therefore be decomposed in a sequence of three single-mode
  operations, we have shown even the most general pure GLOCC cannot achieve
  this.
\end{proof}

Thus, in particular, we have shown that from a
  symmetric state $\gamma_\mr{sym}(\lambda)$, which has
  $|D_{ij}|=-(\lambda^2-1)/2\leq0$, it is impossible to obtain via
arbitrary GLOCC any state with $|D_{ij}|>0$ for some $(ij)$.

\subsection{Initial states with positive determinant}
To show that these signs are not a GLOCC-invariant, and that, in fact, a
positive determinant $|D_{ij}|>0$ can always be made negative by GLOCC we
prove the following
\begin{lemma}
  Given a pure three-mode Gaussian state with one positive determinant
  $|D_{ij}|>0$, there exists a GLOCC to transform it into a state with three
  negative determinants.
\end{lemma}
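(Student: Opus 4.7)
The plan is to act with a single-mode GLOCC on the mode opposite to the one positive determinant and to exploit the expressions for the output quantities $C_i'\equiv\lambda_i'^2-\lambda_j'^2-\lambda_k'^2+1$ already derived in the proof of the previous lemma. Observe first that at most one $|D_{ij}|$ can be strictly positive: by \Eqref{eq:detDij}, having both $|D_{12}|,|D_{13}|>0$ would give $2|D_{12}|+2|D_{13}|=2(1-\lambda_1^2)>0$, i.e., $\lambda_1<1$, contradicting $\lambda_i\geq 1$. Without loss of generality assume $|D_{12}|>0$, equivalently $C_3>0$, while $|D_{13}|,|D_{23}|<0$, and apply a GLOCC with parameters $(r,x,\phi)$ to mode~$3$.

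From \Eqref{eq:C1} and \Eqref{eq:C2}, the transformed $C_1'$ and $C_2'$ are strictly positive multiples of the input $C_1$ and $C_2$, so they remain strictly negative automatically. Only $C_3'$ must be controlled. Choosing $x>1$ and $\phi$ with $\cos(2\phi)=-1$ in \Eqref{eq:C3} leaves the numerator equal to $(c_xA-s_xB)\ch r+C$; since $c_x,|s_x|\to\infty$ with $c_x/|s_x|\to 1$ as $x\to\infty$ while $C$ stays bounded at fixed $r>0$, it suffices to establish the \emph{input-state} inequality $A-B<0$.

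The algebraic heart of the proof is the identity
\[
A+4\lambda_1^2\lambda_2^2 \;=\; C_3^2+4\lambda_3^2,
\]
obtained by direct expansion of the definitions of $A$ and $C_3^2$. Combining it with $B^2=a_{12}b_{12}$ and the elementary relations $a_{12}-b_{12}=8\lambda_1\lambda_2 C_3$ and $a_{12}+b_{12}=4C_3^2-2A$ (immediate from the definitions of $a_{ij},b_{ij}$ in Lemma~\ref{lemmapure1}), one obtains
\[
A^2-B^2 \;=\; 4C_3^2\bigl(A+4\lambda_1^2\lambda_2^2-C_3^2\bigr) \;=\; 16\lambda_3^2 C_3^2 \;\geq\; 0,
\]
so $|A|\geq B$. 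The remaining claim $A\leq 0$ is equivalent to $C_3^2+4\lambda_3^2\leq(2\lambda_1\lambda_2)^2$, a quadratic inequality in $\lambda_3^2$ whose solution set is an interval; a short check shows that this interval contains the physical range $\lambda_3\in[|\lambda_1-\lambda_2|+1,\lambda_1+\lambda_2-1]$ of \Eqref{CondPos}, reducing to the trivial $(\lambda_1-1)(\lambda_2-1)\geq 0$ and $(\lambda_1+1)(\lambda_2-1)\geq 0$.

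Combining $A\leq 0$ with $|A|\geq B$ gives $A\leq -B$, and the assumed $C_3>0$ upgrades both to strict (as $A^2>B^2$ forces $A\neq 0$), so $A-B<0$. For $r>0$ fixed and $x$ sufficiently large, the numerator of $C_3'$ is therefore negative while $C_1',C_2'$ remain strictly negative, producing a state with all three $|D_{ij}'|<0$. The one non-trivial step is spotting the identity $A+4\lambda_1^2\lambda_2^2=C_3^2+4\lambda_3^2$; once it is in hand, the required inequalities $|A|\geq B$ and $A\leq 0$ follow by routine book-keeping, and the $x\to\infty$ limit does the rest.
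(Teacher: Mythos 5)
Your proof is correct, but it establishes the crucial inequality by a genuinely different route than the paper. Both arguments reduce the problem, via \Eqref{eq:C1}--\Eqref{eq:C3}, to showing that the combination $A-B$ built from the \emph{input} purities is strictly negative, and both then drive $C_3'$ negative with the same extremal GLOCC (your choice $\cos2\phi=-1$, $x\to\infty$ is the paper's choice $\phi=0$, $x\to0$ under the symmetry $x\to1/x$, $\phi\to\phi+\pi/2$). The paper gets $A-B<0$ indirectly: it reuses the observation from the preceding lemma that for \emph{any} valid output $C_1'+C_2'+C_3'\leq0$, so the $x\to\infty$ limit \Eqref{eq:C3limit} forces $A+B\leq0$, and then invokes $B=\sqrt{w_1}>0$ to conclude $A-B\leq-2B<0$. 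You instead prove the inequality purely algebraically on the input data: I checked your identity $A+4\lambda_1^2\lambda_2^2=C_3^2+4\lambda_3^2$, the relations $a_{12}-b_{12}=8\lambda_1\lambda_2C_3$ and $a_{12}+b_{12}=4C_3^2-2A$ (with $B^2=a_{12}b_{12}$), hence the factorization $(A+B)(A-B)=16\lambda_3^2C_3^2$, and the endpoint evaluations of $A$ at $\lambda_3=\lambda_1+\lambda_2-1$ and $\lambda_3=|\lambda_1-\lambda_2|+1$, which indeed give $-2(\lambda_1+\lambda_2-1)(\lambda_1-1)(\lambda_2-1)$ and $-2(\lambda_1-\lambda_2+1)(\lambda_1+1)(\lambda_2-1)$, so convexity in $\lambda_3^2$ yields $A\leq0$ on the whole physical interval of \Eqref{CondPos}; with $C_3>0$ this gives $A<-B\leq0$, as you say. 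What your route buys is a self-contained, quantitative proof (it also re-derives the paper's $A+B\leq0$ and shows exactly how negative the limiting $C_3'$ is) that does not lean on the physical-consistency-of-the-output argument or the footnote identifying $B$ with $\sqrt{w_1}$; what the paper's route buys is brevity, since it recycles the $x\to\infty$ limit already established in the previous lemma. Two small elisions you may wish to make explicit: the strict negativity of the input $|D_{13}|,|D_{23}|$ follows from \Eqref{Eqgammapure} (e.g.\ $|D_{13}|=1-\lambda_1^2-|D_{12}|\leq-|D_{12}|<0$), which is needed so that $C_1',C_2'$ are strictly negative, and the requirement $r>0$ (so $\sinh^2 r>0$) for those prefactors to be strictly positive.
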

\begin{proof}
Recall that for pure three-mode states there is at most one positive
determinant, see, e.g., Eq.~(\ref{Eqgammapure}). Assume, without loss of
generality, that $|D_{12}|>0$. From 
Eqs.~(\ref{eq:C1},\ref{eq:C2}) it is clear that to change
the sign of $|D_{12}|$ we must perform a GLOCC at mode 3. The determinant
after a general one-mode GLOCC is given by \Eqref{eq:C3}. Now consider the
case $\phi=0$ and the limits $x\to\infty$ and $x\to0$. As before, the limit
$x\to\infty$ proves that $A+B\leq0$, cf.~\Eqref{eq:C3limit}. For $x\to0$, we
obtain
\begin{equation}
  \label{eq:8}
  C_3(x\to0)= \frac{A-B}{4\lambda_3^2}.
\end{equation}
Since $B>0$ it follows that $C_3(x\to 0)<0$, i.e., for sufficiently small $x$
all three determinants $C_i$ are negative.
\end{proof}

Let us, as an example consider the distributed two-mode squeezed states with
CM $\gamma(s,\theta)$, discussed in \Secref{SecGeneration}. They are obtained
by passing part of a two-mode squeezed state $\gamma(s)$ through a beam
splitter with transmissivity $t=\cos^2\theta$, see \Eqref{eq:degState}. These
states have one off-diagonal block proportional to the identity, say
$D_{23}=-\sin\theta\cos\theta(\ch s-1)\id$, i.e., with positive
determinant. The other two off-diagonal blocks are proportional to $\sigma_z$,
i.e., $D_{12}=\cos\theta\sh s\sigma_z$ and $D_{13}=-\sin\theta\sh
s\sigma_z$. When performing a GLOCC characterized by two-mode squeezing
parameter $r$ and local squeezing $x$, i.e.,
$\Gamma=[\id\oplus\diag(x,x^{-1})]\gamma(r)[\id\oplus\diag(x,x^{-1})]$ on mode
1 (the one with the largest local mixedness), we obtain from
Eq.~(\ref{eq:Rprime}):
\begin{equation}
  \label{eq:D23p}
D_{23}'=D_{23}-D_{12}\left[ \ch r\left( \begin{array}{cc}x^2&0\\ 0&x^{-2}
\end{array} \right) +\ch s\id\right]^{-1}D_{13},
\end{equation}
therefore
\begin{align}
  \label{eq:D23entries}
  d_{23}^{\pm'}&= -\frac{\sin(2 \theta) \sh^2(s/2)(x^{\pm2}\ch r-1)}{\ch s +
    x^{\pm2}\ch r},
\end{align}
i.e., for $x^2<\ch r$ or $x^{-2}<\ch r$ one of the two coefficients is
negative (while the other is positive), yielding $|D_{23}'|<0$ for all
$x\not\in[\sqrt{1/\ch r},\sqrt{\ch r}]$. Since the signs of $|D_{12}|$ and
$|D_{13}|$ do not change, we have transformed the initial state with
$\mbox{sign}(|D_{12}|)=-1$, $\mbox{sign}(|D_{13}|)=-1$, and
$\mbox{sign}(|D_{23}|)=1$ to a state with all signs negative.

In fact, we can even obtain all 
symmetric states starting from
a distributed two-mode squeezed initial state. Let us consider the simple
one-parameter family of degenerate states with CM 
$\gamma=\gamma(s,\theta=\pi/4)$. Clearly, in this case the two smaller of the
three parameters are identical, i.e., our initial state is $\lambda_1=\ch s$
and $\lambda_2=\lambda_3=(\ch s+1)/2$. Then it suffices to perform a suitable
measurement at mode 1 to obtain $\lambda_1'=\lambda_2'=\lambda_3'$. Moreover,
by choosing $s$ large enough, it is possible to obtain \emph{all} 
symmetric states
this way. To see this, we use again Eqs.~(\ref{GLOCCtrA}-\ref{GLOCCtrC}) for
an operation characterized by $(r,x,\phi=0)$. Then it is straight forward to
see that by taking
\begin{equation}
  \label{eq:chx}
  x^2+x^{-2}=\frac{\csch^2\frac{s}{2} \left(3 \ch^2 r \ch s +\ch^2 r-\ch
   s-3\right)}{2 \ch r},
\end{equation}
we can prepare the symmetric state $\gamma_\mr{sym}(\lambda')$ with
\begin{equation}
  \label{eq:13}
  \lambda'= \frac{4 \ch^2\frac{s}{2} \left(\ch^2r \ch s-1\right)}{6
   \left(\ch^2 r-1\right) \ch s-2 \ch^2r+\ch(2s)+1}.
\end{equation}
Note that the right-hand side of \Eqref{eq:chx} is $\geq2$ for all choices of $s,r$, thus
there is always $x\geq1$ corresponding to the desired value. Considering the
limits $s\to\infty$ and $s\to0$, we see that $\lambda'\to\ch^2 r$ and
$\lambda'\to1$, respectively. Consequently, for any target state
$\gamma_\mr{sym}(\lambda')$ there exist $s,r\geq0$ and $x>0$ such that the
symmetric state $\lambda'$ can be prepared from the degenerate state
$\gamma(s,\pi/4)$ by a the single-mode GLOCC with parameters $(r,x,0)$. Thus,
the one-parameter family $\{\gamma(s,\pi/4):s\geq0\}$ is ``more strongly
entangled'' than $\{\gamma_\mr{sym}(\lambda):\lambda\geq1\}$ in the sense that
the latter can be obtained from the former by deterministic GLOCC but the
reverse is not possible. We leave as an open question whether \emph{all} pure
three-mode Gaussian states can be obtained from $\{\gamma(s,\pi/4):s\geq0\}$
by GLOCC. Since a TMS operation acting on the first mode allows to arbitrarily
reduce the parameter $s\geq0$ (without changing $\theta$) \cite{GECP03} a
positive answer would imply that there is a single (unnormalizable) pure
three-mode state from which all others can be obtained by GLOCC.

Let us finally remark on the entanglement properties associated with the
appearance of a positive determinant (say
$|D_{12}|=(\lambda_3^2+1-\lambda_2^2-\lambda_1^2)/2>0$): First, it means that
$\lambda_1,\lambda_2$ are too small (relative to $\lambda_3$), i.e., there is
too little entanglement available between the modes (12): most (or in the case
of the distributed two-mode squeezed states: all) of the mixedness at these
modes arises from the entanglement with mode 3. Since a two-mode Gaussian
state is necessarily separable if the off-diagonal block of its CM has
non-negative determinant \cite{Sim00}, we see that in that case there is no
residual entanglement between modes (23). As we have seen this strong
concentration of entanglement into one mode cannot be generated by GLOCC. On
the other hand, we have seen that a GLOCC on mode $1$ allows to induce
residual entanglement between the modes (23) (e.g., by generating a
symmetric state) even if their reduced state was separable initially.\\
For the special case $\theta=\pi/4$ and for a suitably chosen Gaussian
operation (essentially, for sufficiently large $x$ and $r$), one can readily
check, using the partial-transpose separability criterion (e.g., in the simple
form for Gaussian $1\times1$ states given in \cite{GDCZ01}), that all three
reduced CMs are entangled.

\section{Conclusions}
We presented an easily computable necessary and sufficient condition for
Gaussian LU--equivalence for an arbitrary number of modes and derived a
standard from for pure three--mode Gaussian states. This showed, in particular
that the entanglement properties of an arbitrary pure three--mode Gaussian
state are completely characterized by three bipartite entanglement measures,
namely the local purities.  This also shows that for pure three-mode Gaussian
states LU--equivalence implies GLU--equivalence.

In order to gain more insight into the relation among the GLU--classes, we
investigated the more general set of GLOCC operations. We provided simple
expressions for GLOCC-transformations between different GLU-classes. For the
pure three-mode states we showed that they can be divided in two classes
(according to whether the sign of the largest determinant $|D_{ij}|$ is
positive or not) such that no GLOCC can transform a state from the second
class to the first. In particular, this shows that the set of symmetric states
(GHZ/W states) does not suffice to generate an arbitrary state via
GLOCC. Among the states unreachable from the symmetric states we identified a
family which, in contrast, allows to prepare all symmetric states.

There are many questions concerning the GLOCC-interconvertibility of pure
multipartite Gaussian states that remain to be addressed: Is there a
``maximally entangled'' family \cite{VSK13} in the sense that \emph{all} other
states can be obtained from it by GLOCC? Is there a majorization relation
governing which states can be GLOCC-transformed into another?  Are there
mutually inaccessible subsets of GLU-classes similar to the W and GHZ classes
for three qubits?  Can the observed restrictions on Gaussian state
transformations be lifted if several copies of the states are considered?  Are
there examples in which general (i.e., non-Gaussian) local unitaries allow the
transformation between two pure Gaussian states that are not in the same
GLU-class or does LU-equivalence of Gaussian states always imply their
GLU-equivalence?  Answers to these questions might lead to a better
understanding of the structure and qualitative features of pure Gaussian
entanglement and be of practical use regarding which states are the most
versatile in terms of state generation.

\begin{acknowledgments}
We acknowledge the kind hospitality of the Pedro Pascual Center for Physics
and the Benasque Quantum Information workshop where much of this work was
done.

GG acknowledges the project MALICIA and its financial support of the Future
and Emerging Technologies (FET) programme within the Seventh Framework
Programme for Research of the European Commission, under FET-Open grant
number: 265522. BK acknowledges financial support of the Austrian Science Fund (FWF): Y535-N16.
\end{acknowledgments}

\appendix
\section{Pure $1\times1\times1$ states: Standard Form}\label{app:A}
We show here that the condition $\gamma J\gamma = J$ implies that any
three--mode CM $\gamma$ is $xp$--blockdiagonal (see Theorem
\ref{th:StdForm}). Let $\gamma$ as given in Eq.~(\ref{eq:1}) denote the
standard form of the CM. In particular, $K=D_{12}$ is diagonal. However,
instead of choosing $O_3$ such that $L=OD_{13}$, for some orthogonal matrix
$O$ and some diagonal matrix $D_{13}$, we chose here without loss of
generality $O_3$ such that $L$ has upper-triangular form \footnote{Note that
  this is always possible for any $L$; as we will see in the case considered,
  both conventions coincide.}. The necessary condition for $\gamma$ to
correspond to a pure state, $\gamma J\gamma = J$, is equivalent to the
following set of equations,
\begin{subequations}
  \begin{align}
    1&=  \lambda_1^2+| D_{12}| + |L|,\\
    1&=\lambda_2^2 + |D_{12}| + |M|,\\
    1&=\lambda_3^2 + | L| + |M|,\\
    \label{eq:2cond4}
    0&=\lambda_1 JD_{12} + \lambda_2 D_{12}J + LJM^T,\\
    \label{eq:2cond5}
    0&=\lambda_1 JL + \lambda_3 LJ + D_{12}JM^T,\\
    \label{eq:2cond6}
    0&=\lambda_2 JM + \lambda_3 MJ + D_{12}^TJL,
  \end{align}
\end{subequations}
Note that $\lambda_i\geq1$ (in particular $\lambda_i \not=0$) for all $i$, must hold for any CM (see e.g. condition \Eqref{iii}). Let us use the notation $x_{ij}=X_{ij}$ for $X\in \{K,L,M\}$. Writing  Eqs.~(\ref{eq:2cond4}-\ref{eq:2cond6}) elementwise
we obtain
\begin{subequations}
\begin{equation}
  \label{eq:AppA1}
  \begin{split}
0&= l_2m_{21},\\
0&=l_{12}m_1-l_1m_{12},\\
0&=\lambda_1 k_2+\lambda_2 k_1-l_{12}m_{21}+l_1m_2,\\
0&=\lambda_1 k_1+\lambda_2 k_2+l_2m_1,
\end{split}
\end{equation}
\begin{equation}
  \label{eq:AppA2}
  \begin{split}
0&=\lambda_1 l_{12}+k_2m_{12},\\
0&=\lambda_3 l_{12}-k_1m_{21},\\
0&=\lambda_1 l_2+\lambda_3 l_1+k_1m_2,\phantom{+m_{12}m_{21}}\\
0&=\lambda_1 l_1+\lambda_3 l_2+k_2m_1,
  \end{split}
\end{equation}
\begin{equation}
  \label{eq:AppA3}
  \begin{split}
0&=\lambda_2 m_{21}-\lambda_3 m_{12},\\
0&=k_1l_2+\lambda_2 m_2+\lambda_3 m_1,\\
0&=k_2l_1+\lambda_2 m_1+\lambda_3 m_2,\phantom{+l_{12}m_{12}}\\
0&=k_2l_{12}+\lambda_2 m_{12}-\lambda_3 m_{21}.
\end{split}
\end{equation}
\end{subequations}
We show that the above equations imply that $l_{12}=m_{12}=m_{21}=0$, i.e.,
also $L$ and $M$ are diagonal. We first discuss the case $l_2\not=0$. Then
the first of \Eqsref{eq:AppA1} implies $m_{21}=0$. Consequently, the second
equation of 
\Eqsref{eq:AppA2} implies $l_{12}=0$ and the second of \Eqsref{eq:AppA2} yields
$m_{21}=0$. If, instead, $l_2=0$, we have that $|L|=0$ and we can
wlog \footnote{By choosing an appropriate orthogonal transformation at mode 3.}
assume $l_{12}=0$. Now consider first $L\not=0$, i.e.,
$l_1\not=0$. Then the second of \Eqsref{eq:AppA1} yields $m_{12}=0$, the
first of \Eqsref{eq:AppA3} implies $m_{21}=0$. If, finally, $L=0$, then
$k_1+k_2=0$ and $m_1+m_2=0$ [Eqs.(\ref{eq:AppA1},\ref{eq:AppA3})], respectively, and then by
the last two of \Eqsref{eq:AppA2} $K=0$ or $M=0$, i.e., mode 1 or
mode 3 factorizes. In either case, both $L$ and $M$ are diagonal and therefore $\gamma$ is $xp$--blockdiagonal.

\section{Proof of Lemma \ref{lemmapure1}}\label{AppB}
Here we present the detail of the proof of Lemma \ref{lemmapure1}. In
particular, we derive the conditions under which $\gamma$ as given in
Eq.~(\ref{GammaDiag}) obeys the necessary condition $\gamma J \gamma=J$.  In
order to increase the readability of the appendix, we restate the equivalent
conditions given in Eq.~(\ref{Eqgammapure}):

\begin{subequations}
\label{AppEqgammapure}
\bea
\lambda_1^2 + |D_{12}|+|D_{13}|=1, \label{Eq1}\\
\lambda_2^2 + |D_{12}|+|D_{23}|=1, \label{Eq2}\\
\lambda_3^2 + |D_{13}|+|D_{23}|=1, \label{Eq3}\\
\lambda_1 D_{12}+\lambda_2 \tilde{D}_{12}+ \tilde{D}_{13}\odot D_{23}=0,\label{Eq4}\\
\lambda_1 D_{13}+\lambda_3 \tilde{D}_{13}+ \tilde{D}_{12}\odot D_{23}=0,\label{Eq5}\\
\lambda_2 D_{23}+\lambda_3 \tilde{D}_{23}+ \tilde{D}_{12}\odot D_{13}=0.\label{Eq6}
\eea
\end{subequations}
As before, $\odot$ denotes the componentwise multiplication (Hadamard
product). Here, we use the index--free notation $D_{12}=\mbox{diag}(a,b)$,
$D_{13}=\mbox{diag}(c,d)$, and $D_{23}=\mbox{diag}(e,f)$, and that
$DJ=J\tilde{D}$, (i.e., $\tilde{D}=-JDJ$)) for any diagonal matrix $D$ and
therefore $DJD=|D|J$. Note that if $D=\mr{diag}(x,y)$, then
$\tilde{D}=\mr{diag}(y,x)$.  In order to solve those equations we distinguish
between the following two cases 
\bi 
\item[i)] at least one of the diagonal matrices, $D_{ij}$ is not invertible and
\item[ii)] none of the determinants vanishes.
\ei

Let us first consider the case (i). Since we do not impose any order on the
$\lambda_i$ we assume without loss of generality that $e=0$. It is then
straight forward to verify that the solution to Eq.~(\ref{AppEqgammapure}) is
given by
\begin{subequations}
\begin{align}
\label{case i}
 \lambda_1&=\sqrt{-1+\lambda_2^2+\lambda_3^2},\\ 
a&=(-1)^{k_1}\sqrt{\lambda_2 (-1+\lambda_2^2)}/\sqrt{\lambda_1},\\ 
 b&=-(-1)^{k_1}\sqrt{-1/\lambda_2+\lambda_2} \sqrt{\lambda_1},\\ 
 c&=(-1)^{k_2}\sqrt{
  \lambda_3 (-1+\lambda_3^2)/\lambda_1},\\ 
d&=-(-1)^{k_2}\sqrt{-1/\lambda_3+\lambda_3} \sqrt{\lambda_1},\\ 
e&=0, \\ 
f&=(-1)^{k_1+k_2}\sqrt{(\lambda_2^2-1)(\lambda_3^2-1)}/\sqrt{(\lambda_2
  \lambda_3)},
\end{align}
\end{subequations}
where $k_1,k_2\in\{0,1\}$. Now, it is easy to see that the four solutions for
the different values of $k_1,k_2$ are GLU--equivalent by choosing
$O=(-1)^{k_1}\one \oplus \one \oplus (-1)^{k_1+k_2}\one$. Thus, we chose
without loss of generality $k_1=k_2=0$.  Given the expressions of the entries
of the diagonal matrices $D_{ij}$ [see \Eqref{Eq_entriesD}] it is straight
forward to check that $a^2= (d_{12}^-)^2$, $b^2= (d_{12}^+)^2$, $c^2=
(d_{13}^-)^2$, $d^2= (d_{13}^+)^2$, $e^2=0= (d_{23}^-)^2$, and $f^2=
(d_{23}^+)^2$. Moreover, it is easy to see that $|D_{ij}|=d_{ij}^+ d_{ij}^-$,
for all three matrices. Thus, the expressions coincide up to a (independent)
global phase for the matrices $D_1,D_2$ (the sign of $D_3$ is thereby
fixed). Let us denote these signs by $k_1,k_2,k_3$ respectively. Clearly,
$d_{12}^+\geq 0$, which implies, since $b\leq 0$, that
$(-1)^{k_1}=-1$. Similarly, it is easy to see that $(-1)^{k_2}=-1$ and
$(-1)^{k_3}=1$ (which has to coincide with $(-1)^{k_1+k_2}$). Thus, the
orthogonal matrix $-\sigma_x\oplus -\sigma_x\oplus \sigma_x$ (corresponding to
a GLU) sorts the entries in the diagonal matrices and applies the right signs
to map $\gamma$ into the form of Eq.~(\ref{GammaDiag}), with the diagonal
entries given in Eq.~(\ref{Eq_entriesD}).

Let us now consider the more involved case (ii). First note that due to
Eq.~(\ref{AppEqgammapure}) the following relations hold
\begin{align}
ab &=|D_{12}|=1/2(1-\lambda_1^2-\lambda_2^2+\lambda_3^2), \\ \nonumber
cd &=|D_{13}|=1/2(1-\lambda_1^2+\lambda_2^2-\lambda_3^2), \\ \nonumber
ef &= |D_{23}|=1/2(1+\lambda_1^2-\lambda_2^2-\lambda_3^2).
\end{align}
Note that two of these determinants are non--positive. More precisely, if
$\lambda_i \geq \lambda_k, \lambda_l$ then $|D_{ik}|\leq 0$ and $|D_{il}|\leq
0$. Let us now define
\begin{align}
x_1&=\frac{1}{4\lambda_2 \lambda_3  |D_{13}|^2},\\ \nonumber
x_2&=\lambda_2^6+\left(-1+\lambda_1^2\right)^2 \lambda_3^2-2 \left(1+\lambda_1^2\right) \lambda_3^4+\lambda_3^6 -\\ \nonumber
&{}-\lambda_2^4 \left(2+2 \lambda_1^2+\lambda_3^2\right)+\\ \nonumber
&{}+\lambda_2^2 \left(\left(-1+\lambda_1^2\right)^2+4 \left(1+\lambda_1^2\right) \lambda_3^2-\lambda_3^4\right).
\end{align}
The solution to Eq.~(\ref{AppEqgammapure}) is then given by
\begin{subequations}
\begin{align}
\label{eq:f}
f&=2\frac{y_1 x}{y_2+y_3 x^2},\\
\label{eq:e}
e&=\frac{|D_{23}|}{f},\\ 
\label{eq:d}
d&= \frac{\sqrt{\lambda_1}\sqrt{-|D_{12}|}}{\sqrt{ex+\lambda_2x^2}},\\
\label{eq:c}
c&=\frac{|D_{13}|}{d}, \\ 
\label{eq:b}
b&= xd, \\
\label{eq:a}
a&= \frac{|D_{12}|}{b}.
\end{align}
\end{subequations}
Here we have used
\begin{align}
\label{eq:x}
x&=(-1)^k\left(\frac{1}{\sqrt{2}}\sqrt{x_1\left(x_2+(-1)^l\sqrt{
        w}\right)}\right),\\
 \label{Eq:w}
w&=(-1+\lambda_1-\lambda_2-\lambda_3) (1+\lambda_1-\lambda_2-\lambda_3) \times
\\ \nonumber
&(-1+\lambda_1+\lambda_2-\lambda_3) (1+\lambda_1+\lambda_2-\lambda_3) \times
\\ \nonumber
&(-1+\lambda_1-\lambda_2+\lambda_3) (1+\lambda_1-\lambda_2+\lambda_3) \times
\\ \nonumber
&(-1+\lambda_1+\lambda_2+\lambda_3) (1+\lambda_1+\lambda_2+\lambda_3)\left(\lambda_2^2-\lambda_3^2\right)^2,\\
y_1&=(\lambda_3^2-\lambda_2^2)|D_{23}|,\\
y_2&=-2\lambda_3|D_{12}|,\\
y_3&=2\lambda_2|D_{13}|,
\end{align}
and $k,l\in \{0,1\}$.

Note that the denominator of $x_1$ is non--vanishing since $D_{13}$ is
invertible. Note further that the denominator of $f$ is vanishing (for
$\lambda_i\geq 1$) iff either (a) $\lambda_1=\sqrt{1-\lambda_2^2+\lambda_3^2}$
or (b) $\lambda_j=\sqrt{1-\lambda_k^2+\lambda_1^2}$, for $j\neq k$, or (c)
$\lambda_i=\sqrt{-1+\lambda_j^2+\lambda_k^2}$ or (d)
$\lambda_2=\lambda_3$. The cases (a-c) cannot occur here, since in those cases
one of the determinants $D_{ij}$ vanishes. Let us now first consider the case
$\lambda_2\neq \lambda_3$ (for case (d), $\lambda_2= \lambda_3$ a similar
argument applies).

Note that $\gamma$ is real only if $w\geq 0$. Let $\lambda_i\geq
\lambda_k,\lambda_j$, for mutually different values of $i,j,k\in \{1,2,3\}$
denote the largest value, then it can be easily seen that $w\geq 0$ iff either
$\lambda_i\leq \lambda_j+\lambda_k-1$ or $\lambda_i\geq
\lambda_j+\lambda_k+1$. As shown in the main text, the second choice is
excluded due to the positivity of $\gamma$.

Note further that all values of $k,l$ lead to a solution. Those equalities
have been derived as follows.  First we use the conditions
$\lambda_1^2+|D_{12}|+|D_{13}|=1, \lambda_2^2+|D_{12}|+|D_{23}|=1,
\lambda_3^2+|D_{13}|+|D_{23}|=1$ to compute $a, c, e$ as functions of the other
parameters. As can be easily seen, the conditions given in
Eq.~(\ref{AppEqgammapure}) imply that $b c (f \lambda_2 + e \lambda_3) - a d
(e \lambda_2 + f \lambda_3)=0$, which implies that $b=xd$, where $x$ is a
function which depends only on $f$ and $\lambda_i$. Using then that
$-de-a\lambda_1-b\lambda_2=0$ we compute $d$ as a function of $f$, $x$, and
$\lambda_i$. Next we compute $c$ as a function of $f$, $x$, and $\lambda_i$ by
using that $-be-c\lambda_1-d\lambda_3=0$. Thus, we have all variables as
functions of $f$, $x$, and $\lambda_i$. Using then the condition $(\gamma
J\gamma- J)_{2,1}=0$ we derive $f=y_1 x/(y_2+y_3x^2)$. The equation $(\gamma
J\gamma- J)_{1,4}=0$ allows us then to compute $x$ as given above. Note that
we obtain two solutions for $d$, namely $\pm d$ for $d$ given in
Eq.~(\ref{eq:d}). However, changing the sign of $d$ amounts to changing the
signs of $c,a,b$ (cf. Eqs.~(\ref{eq:c}-\ref{eq:a})) and corresponds therefore
to the GLU $O_1=-\one$, $O_2=O_3=\one$.

It is tedious, but straight forward to show that all four solutions, $k,l\in
\{0,1\}$ are GLU--equivalent to the one with $k=l=0$.

Similarly to the case (i) it can now be shown that the expressions we derived
for the entries of the diagonal matrices coincide with the ones given in
Eq.~(\ref{Eq_entriesD}). However, here we have that $a=d_{12}^+$, etc. For
$\lambda_2=\lambda_3$ a similar argument can be used to arrive at the same
conclusion, which completes the proof.

As shown in the following section appendix, the necessary condition
that $\gamma\geq 0$ is equivalent to the condition given in
Eq.(\ref{CondPos}).

Note that this implies that given the three local purities $\lambda_i$ (or
equivalently the bipartite entanglement shared in the three splittings
$i|jk$), the state is uniquely determined. The reason for that is that
$\lambda_1=\sqrt{-1+\lambda_2^2+\lambda_3^2}$ iff $e=0$ [also in case (ii)]
and therefore, knowing the parameters $\lambda_i$ implies that we also know
which of the two cases the state belongs to. Thus, an arbitrary state is
uniquely determined (up to GLUs) by the bipartite entanglement.

\section{Positivity of $\gamma(\lambda_1,\lambda_2,\lambda_3)$}\label{AppC}
To see that the conditions
\begin{equation}
  \label{eq:AppC-cond1}
\lambda_i+\lambda_j\geq\lambda_k+1\,\forall (ijk)
\end{equation}
(cf. \Eqref{CondPos})
imply positivity of the CM $\gamma=\gamma(\lambda_1,\lambda_2,\lambda_3)$
we proceed as follows: $\gamma$ is by construction $xp$-blockdiagonal and
since it
has been constructed to satisfy the purity condition $\gamma J\gamma=J$, it
follows that $\gamma_p=\gamma_x^{-1}$, hence positivity of $\gamma_x$ implies
positivity of $\gamma$. Using the Schur complement \cite{HJ87} positivity of
$\gamma_x$ is, as $\lambda_3>0$, equivalent to positivity of the $2\times2$ matrix $Y$
\begin{equation}
  \label{eq:11}
  Y = \left( \begin{array}{cc} \lambda_1&d_{12}^+\\ d_{12}^+ & \lambda_2
\end{array} \right)-\left( \begin{array}{c} d_{13}^+\\ d_{23}^+
\end{array} \right)\frac{1}{\lambda_3}\left( \begin{array}{cc} d_{13}^+& d_{23}^+
\end{array} \right),
\end{equation}
which is equivalent to the two conditions
\begin{align}\label{eq:AppC-cond2a}
  \tr Y&\geq0,\\
\label{eq:AppC-cond2b}
\det Y&\geq0.
\end{align}
The trace is found to be
\[
\frac{\lambda_1 + \lambda_2}{8 \lambda_1 \lambda_2 \lambda_3^2}
\left(K_1-1 - \sqrt{w_1}\right),
\]
where we have introduced
\begin{align}
  \label{eq:12}
K_1&=-\sum_i \lambda_i^4+2\sum_i(\lambda_j^2\lambda_k^2+\lambda_i^2),\\
w_1&=(-1+\lambda_1-\lambda_2-\lambda_3)
(1+\lambda_1-\lambda_2-\lambda_3)\times\\
&\nonumber{}\times(-1+\lambda_1+\lambda_2-\lambda_3) (1+\lambda_1+\lambda_2-\lambda_3) \times \\
\nonumber &{}\times (-1+\lambda_1-\lambda_2+\lambda_3)
(1+\lambda_1-\lambda_2+\lambda_3)\times\\
&\nonumber{}\times (-1+\lambda_1+\lambda_2+\lambda_3)
(1+\lambda_1+\lambda_2+\lambda_3).
\end{align}
It follows directly from \Eqref{eq:AppC-cond1} that $w_1\geq0$. It is
tedious, but straight forward to show that
\begin{align}
\det Y&= \frac{\tr Y}{\lambda_1+\lambda_2}.
\end{align}
Thus we see that both conditions
Eqs.~(\ref{eq:AppC-cond2a},\ref{eq:AppC-cond2b}) hold and therefore
$\gamma\geq0$ if
\begin{equation}
  \label{eq:PosCond}
  K_1-1\geq\sqrt{w_1}.
\end{equation}
To see that $K_1-1\geq0$ we write it as a sum of positive terms (using that the conditions given in Eq.~(\ref{CondPos}) are satisfied):
\begin{align*}
  K_1-1&=\frac{1}{4}(\left[(\lambda_3-1)^2-(\lambda_2-\lambda_1)^2\right]\left[(\lambda_1+\lambda_2)^2-(\lambda_3+1)^2\right]+\\
&{}+\left[\lambda_3^2-(\lambda_1-\lambda_2-1)^2\right]\left[(\lambda_1+\lambda_2-1)^2-\lambda_3^2\right]+\\
&{}+\left[\lambda_3^2-(\lambda_1-\lambda_2+1)^2\right]\left[(\lambda_1+\lambda_2-1)^2-\lambda_3^2\right]+\\
&{}+\left[(\lambda_3-1)^2-(\lambda_1-\lambda_2)^2\right]\left[(\lambda_1+\lambda_2)^2-(\lambda_3-1)^2\right])\\
&{}+\sum_i\lambda_i^2\left(
  -\lambda_i+\lambda_j+\lambda_k\right)+\sum_i\lambda_i(2\lambda_i-1)+2\Pi_l\lambda_l,
\end{align*}
where $(jk)$ in $\sum_i(\lambda_j+\lambda_k)^2$ refer in each term to the two
indices distinct from $i$. Now the the remaining condition
$K_1-1\geq\sqrt{w_1}$ can be checked for the
squares of both sides and we find it trivially satisfied:
\[
(K_1-1)^2-w_1=64\lambda_1^2\lambda_2^2\lambda_3^2\geq0.
\]
Therefore both $\det D\geq0$ and $\tr D\geq0$ and consequently
$\gamma(\lambda_1,\lambda_2,\lambda_3)\geq0$ whenever the $\lambda$'s satisfy Eq.
\eqref{CondPos}.

%

\end{document}